\newcommand{\setP}{\ensuremath{P}\xspace}
\newcommand{\setR}{\ensuremath{R}\xspace}
\newcommand{\wwlog}{without loss of generality\xspace}
\newcommand{\bigOh}{\mathcal{O}}
\newcommand{\numNE}{\kappa}
\newcommand{\numBorder}{k}
\newcommand{\eps}{\varepsilon}
\newcommand{\metricSet}{\mathbb{R}^d}
\newcommand{\metricFunc}{\textup{\textsf{d}}}
\newcommand{\dist}[2]{\metricFunc(#1,#2)}
\newcommand{\nn}[1]{\textup{nn}(#1)}
\newcommand{\nenemy}[1]{\textup{ne}(#1)}
\newcommand{\dnn}[1]{\metricFunc_\textup{nn}(#1)}
\newcommand{\dne}[1]{\metricFunc_\textup{ne}(#1)}
\newcommand{\NN}{\textup{NN}\xspace}
\newcommand{\NE}{\textup{NE}\xspace}
\newcommand{\CNN}{\textup{CNN}\xspace}
\newcommand{\FCNN}{\textup{FCNN}\xspace}
\newcommand{\NET}{\textup{NET}\xspace}
\newcommand{\MSS}{\textup{MSS}\xspace}
\newcommand{\VSS}{\textup{VSS}\xspace}
\newcommand{\RSS}{\textup{RSS}\xspace}
\definecolor{yellowcd}{RGB}{252, 229, 30}
\definecolor{bluecd}{RGB}{51, 0, 68}
\title{Guarantees on Nearest-Neighbor Condensation heuristics\thanks{Research supported by NSF grant CCF--1618866.}}
\author{Alejandro Flores-Velazco\thanks{University of Maryland, College Park, {\tt afloresv@cs.umd.edu}}
        \and
        David Mount\thanks{University of Maryland, College Park, {\tt mount@umd.edu}}}
\begin{document}
\thispagestyle{empty}
\maketitle


\noindent\textbf{Abstract}~ 
The problem of nearest-neighbor (\NN) condensation aims to reduce the size of a training set of a nearest-neighbor classifier while maintaining its classification accuracy. Although many condensation techniques have been proposed, few bounds have been proved on the amount of reduction achieved. In this paper, we present one of the first theoretical results for practical \NN condensation algorithms. We propose two condensation algorithms, called \RSS and \VSS, along with provable upper-bounds on the size of their selected subsets. Additionally, we shed light on the selection size of two other state-of-the-art algorithms, called \MSS and \FCNN, and compare them to the new algorithms.

\section{Introduction}

In machine learning, \emph{classification} involves a training set $\setP \subset \metricSet$ of $n$ \emph{labeled} points in Euclidean space.
The label $l(p)$ of each point $p \in \setP$ indicates the \emph{class} to which the point belongs to, partitioning of \setP into a finite set of \emph{classes}. Given an \emph{unlabeled} query point $q \in \metricSet$ the goal of a \emph{classifier} is to predict $q$'s label using \setP.

The \emph{nearest-neighbor} (\NN) \emph{rule} is among the best-known classification techniques~\cite{fix_51_discriminatory}. It classifies a query point $q$ with the label of its closest point in $\setP$, according to some metric. Throughout, we will assume the Euclidean $\ell_2$ metric.
Despite its simplicity, the \NN rule exhibits good classification accuracy both experimentally and theoretically ~\cite{stone1977consistent,Cover:2006:NNP:2263261.2267456,devroye1981inequality}.
However, it's often criticized due to its high space and time complexities. This raises the question of whether
it is possible to replace \setP with a significantly smaller subset without affecting the classification accuracy under the \NN rule. This problem is called \emph{nearest-neighbor~condensation}. In this paper we propose two new \NN condensation algorithms and analyze their worst-case performance.

%
%
%
\subsection{Preliminaries}

For any point $p \in \setP$, define an \emph{enemy} of $p$ to be any point in \setP of different class than $p$. The \emph{nearest enemy} (\NE) of $p$, denoted $\nenemy{p}$, is the closest such point, and its distance from $p$, called the \emph{\NE distance}, is denoted as $\dne{p} = \dist{p}{\nenemy{p}}$. Similarly, denote the \NN distance as $\dnn{p} = \dist{p}{\nn{p}}$. Define the \emph{\NE ball} of $p$ to be the ball centered at $p$ with radius $\dne{p}$. Let $\numNE$ denote the number of distinct \NE points of \setP.

A point $p \in \setP$ is called a \emph{border point} if it is incident to an edge of the Delaunay triangulation of $\setP$ whose opposite endpoint is an enemy of $p$. Otherwise, $p$ is called an \emph{internal point}. By definition, the border points of \setP completely characterize the portion of the Voronoi diagram that separates Voronoi cells of different classes. Let $\numBorder$ denote the number of border points of \setP.

%
%
%
\subsection{Related work}

\begin{figure*}[t]
    \centering
    \begin{subfigure}[b]{.20\linewidth}
        \centering\includegraphics[width=.95\textwidth]{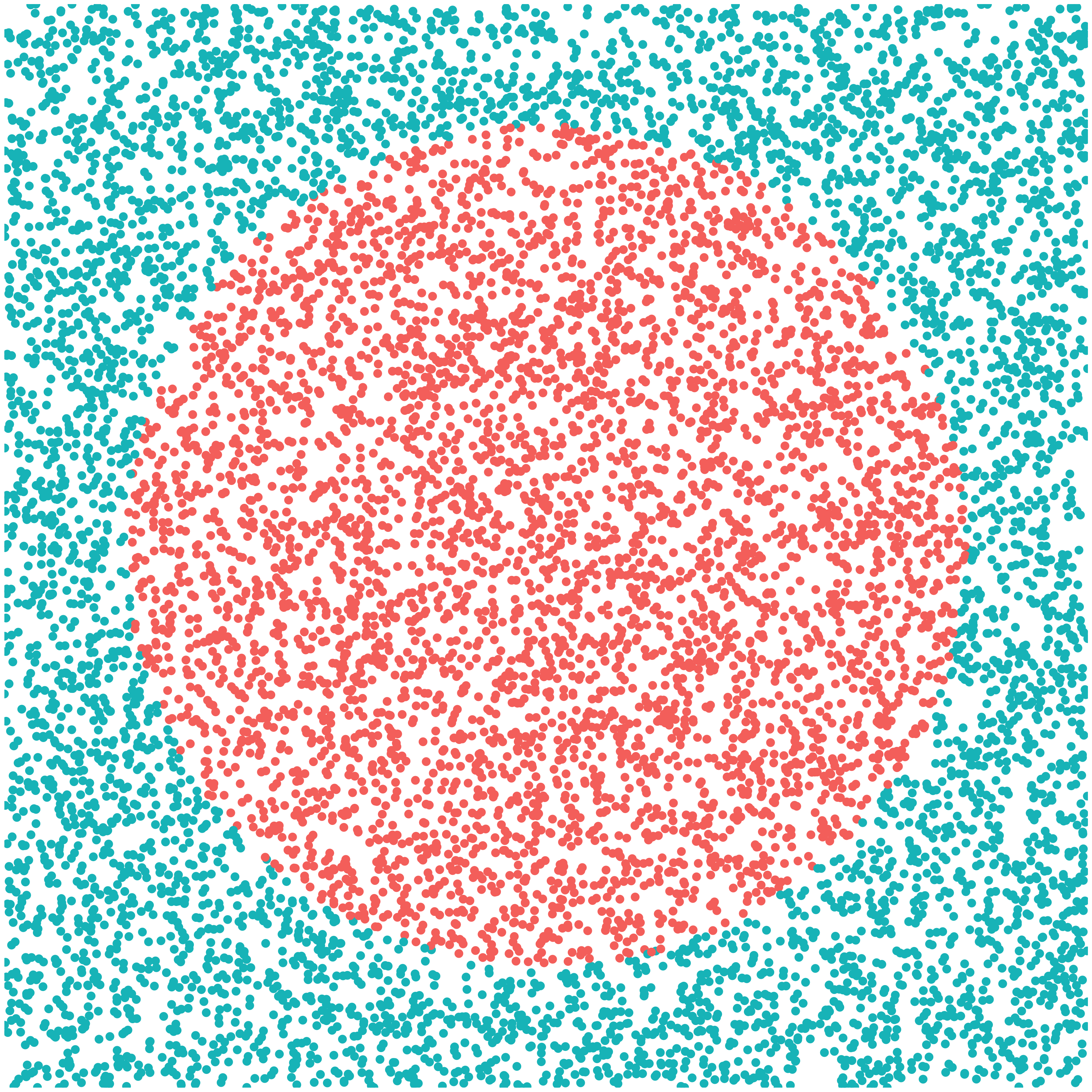}
        \caption{Set \setP ($10^4$\,pts)}\label{fig:circle:dataset}
    \end{subfigure}%
    \begin{subfigure}[b]{.20\linewidth}
        \centering\includegraphics[width=.95\textwidth]{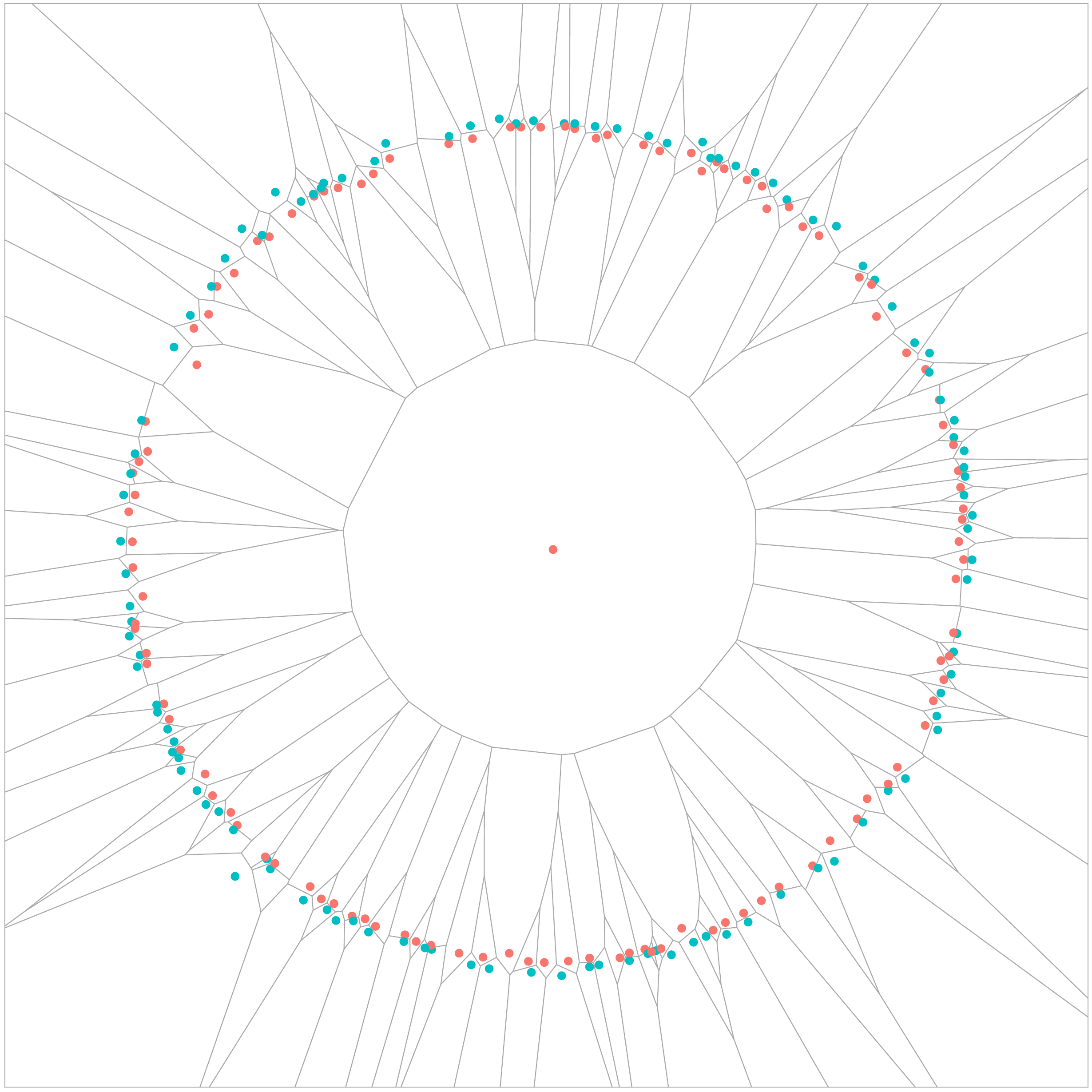}
        \caption{\FCNN (222 pts)}\label{fig:circle:fcnn}
    \end{subfigure}%
    \begin{subfigure}[b]{.20\linewidth}
        \centering\includegraphics[width=.95\textwidth]{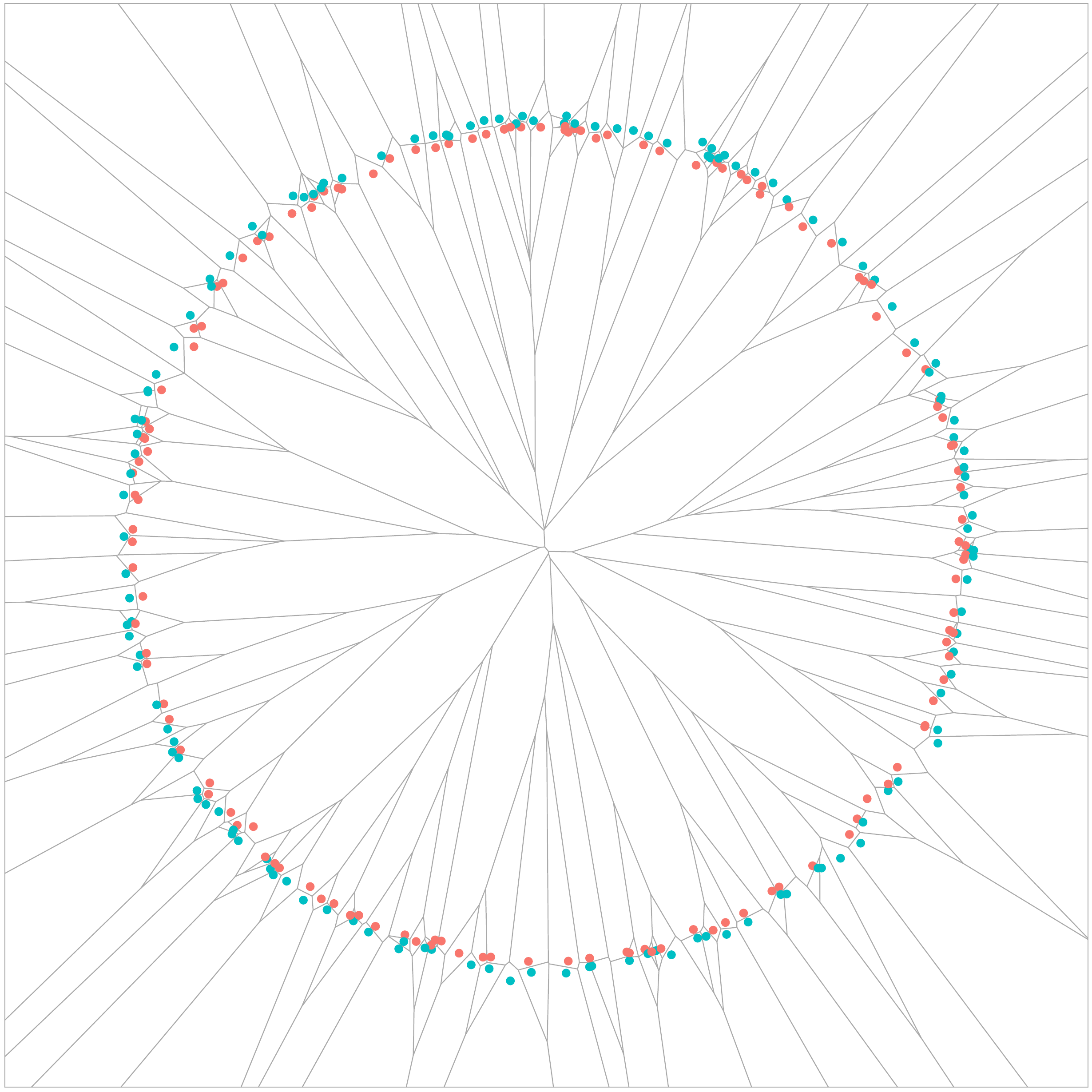}
        \caption{\MSS (272 pts)}\label{fig:circle:mss}
    \end{subfigure}%
    \begin{subfigure}[b]{.20\linewidth}
        \centering\includegraphics[width=.95\textwidth]{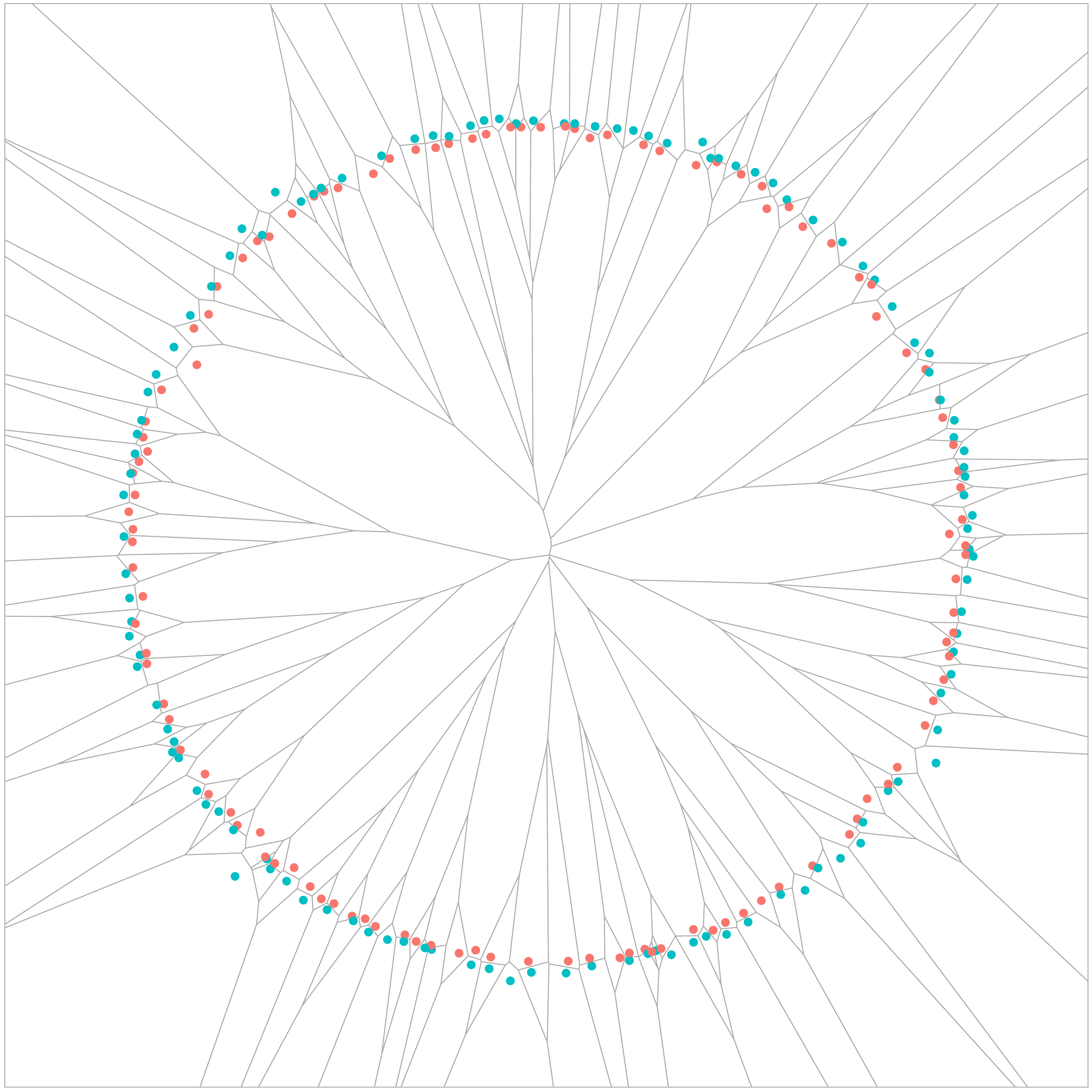}
        \caption{\RSS (233 pts)}\label{fig:circle:rss}
    \end{subfigure}%
    \begin{subfigure}[b]{.20\linewidth}
        \centering\includegraphics[width=.95\textwidth]{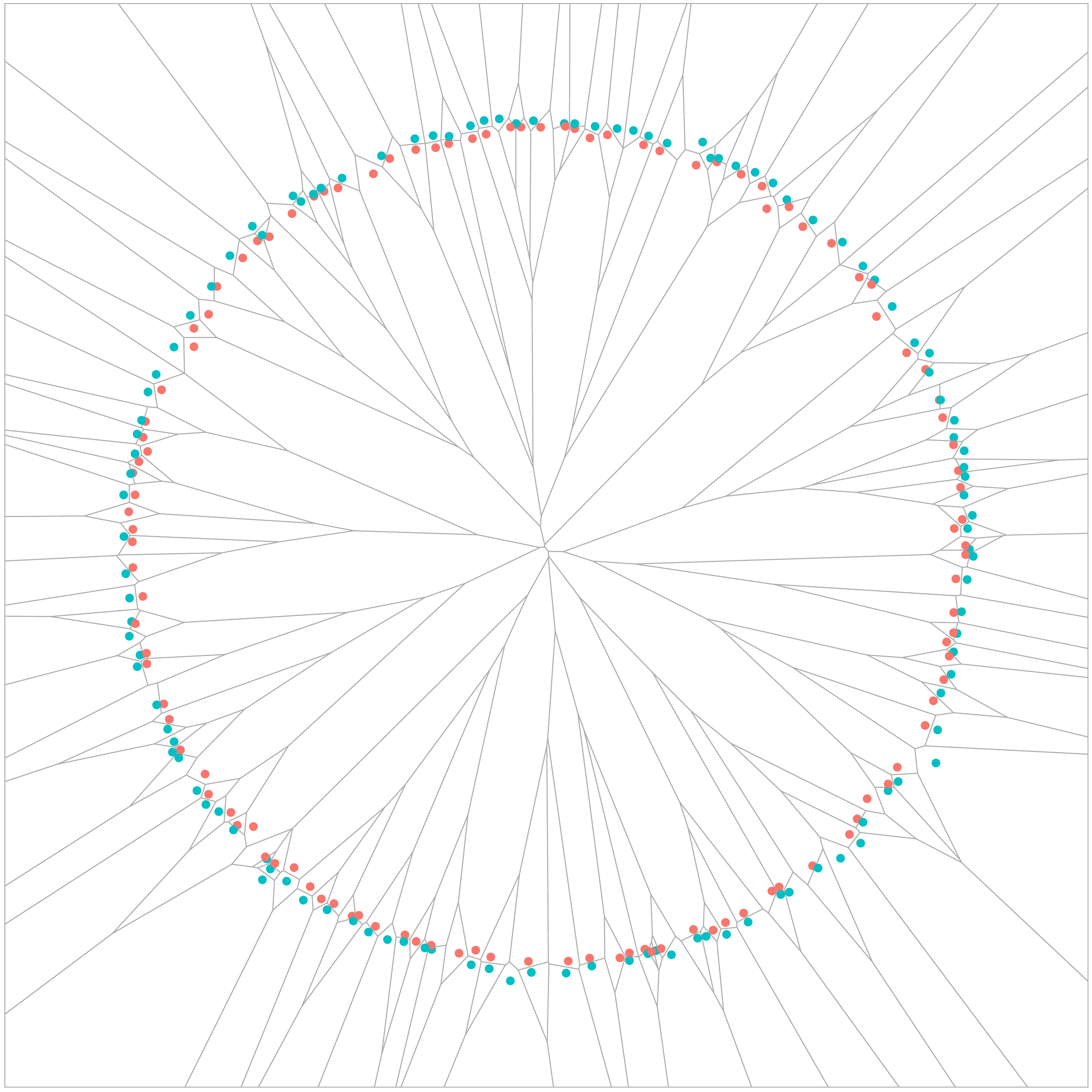}
        \caption{\VSS (233 pts)}\label{fig:circle:vss}
    \end{subfigure}%
    
    \bigskip
    \begin{subfigure}[b]{.20\linewidth}
        \centering\includegraphics[width=.95\textwidth]{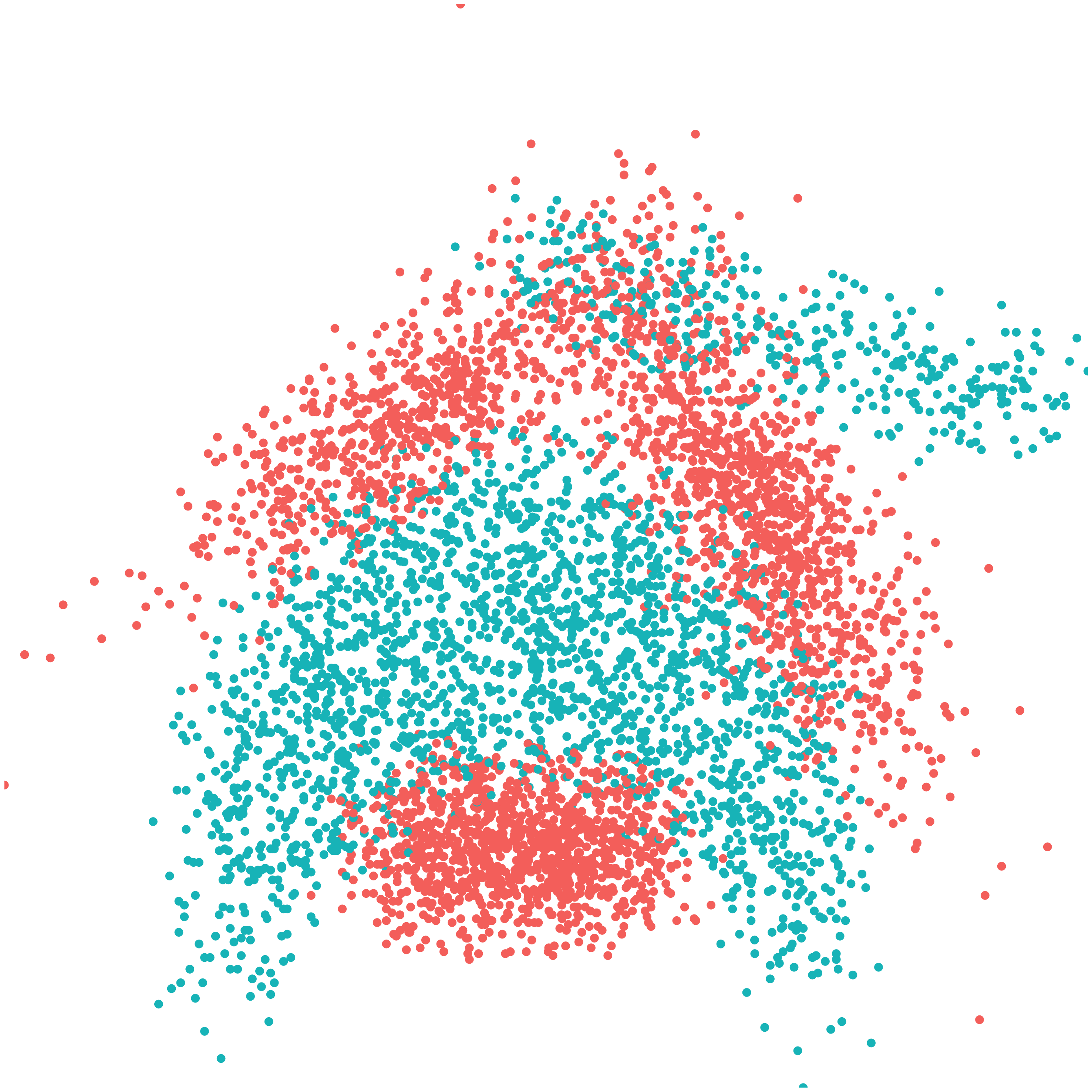}
        \caption{Set \setP (5300 pts)}\label{fig:banana:dataset}
    \end{subfigure}%
    \begin{subfigure}[b]{.20\linewidth}
        \centering\includegraphics[width=.95\textwidth]{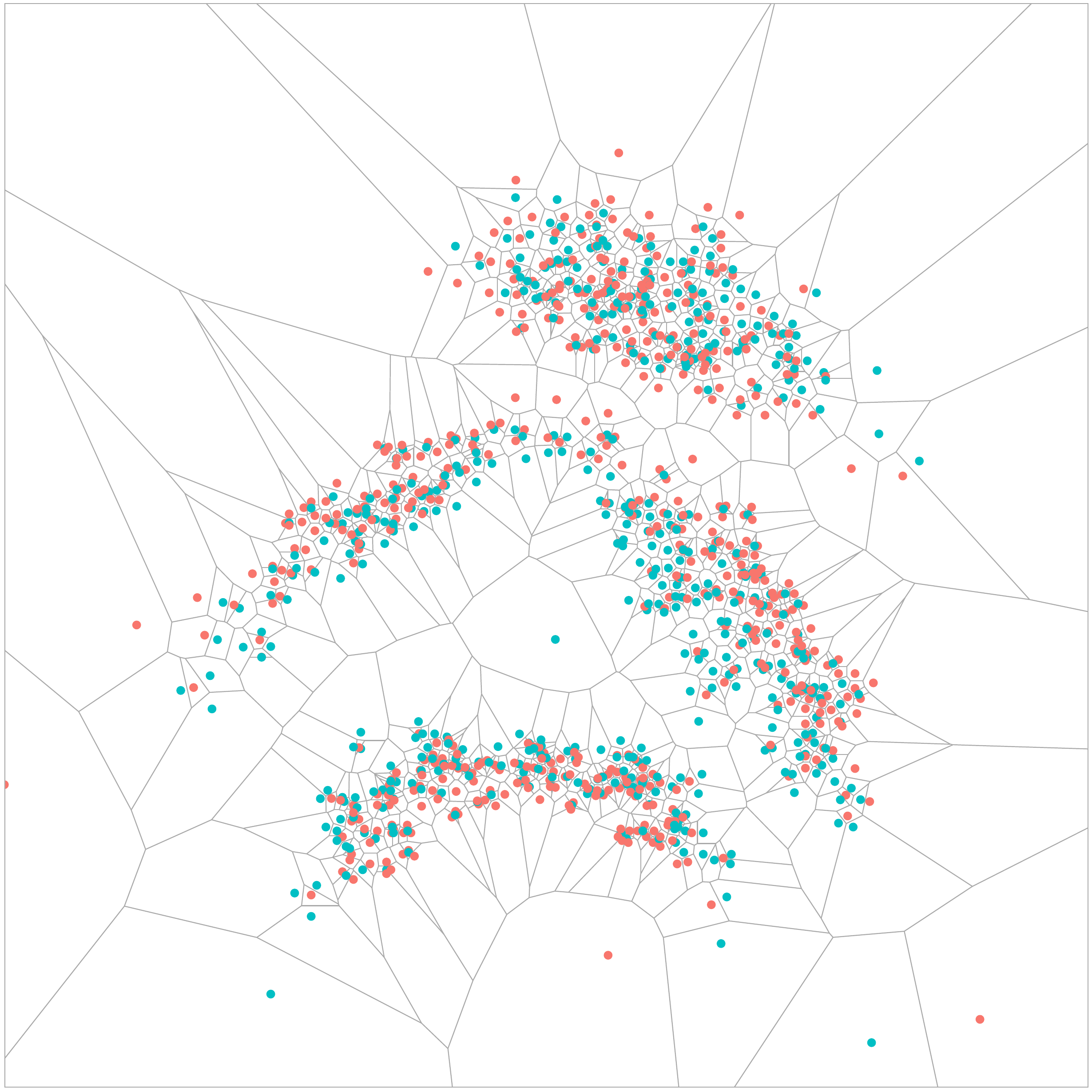}
        \caption{\FCNN (1046 pts)}\label{fig:banana:fcnn}
    \end{subfigure}%
    \begin{subfigure}[b]{.20\linewidth}
        \centering\includegraphics[width=.95\textwidth]{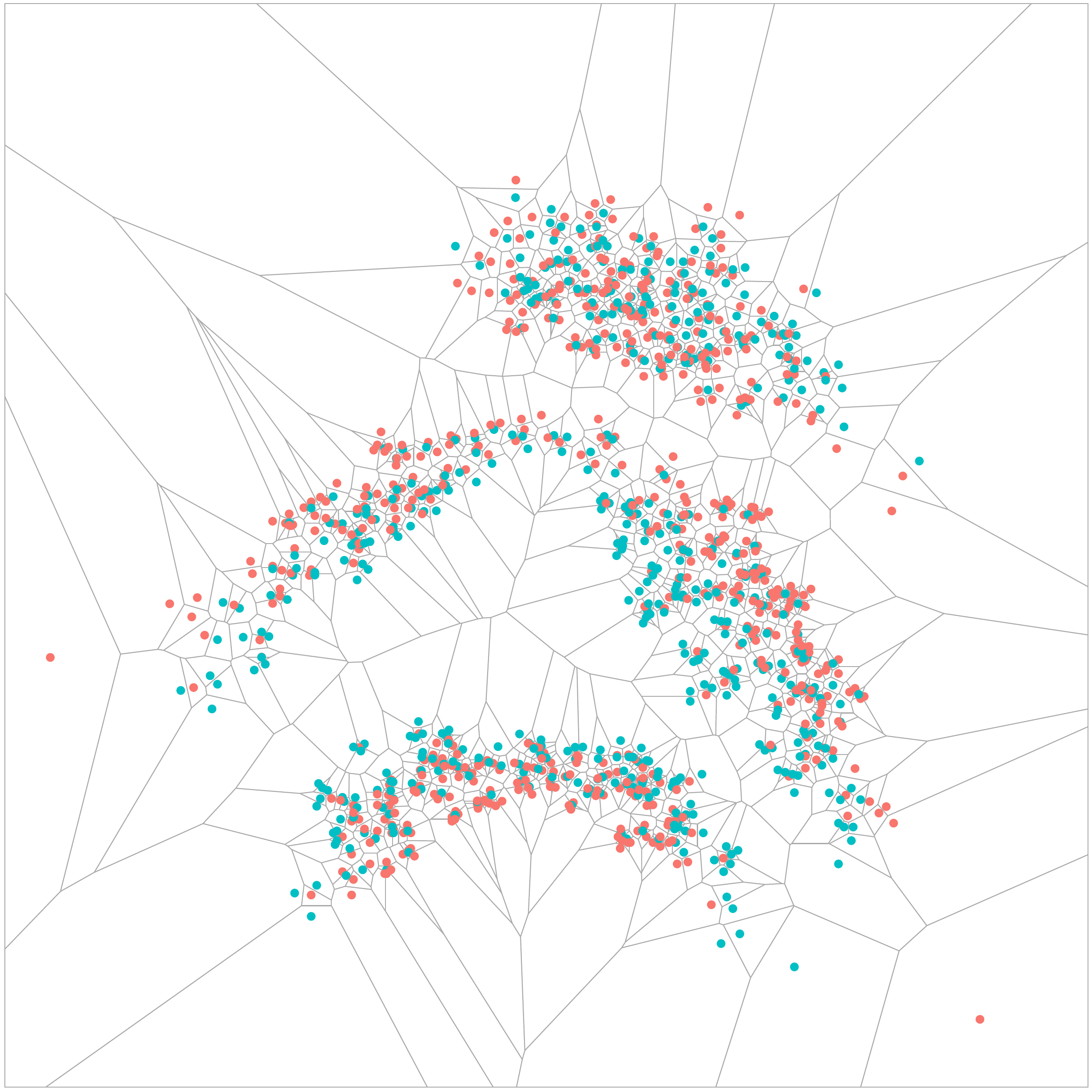}
        \caption{\MSS (1136 pts)}\label{fig:banana:mss}
    \end{subfigure}%
    \begin{subfigure}[b]{.20\linewidth}
        \centering\includegraphics[width=.95\textwidth]{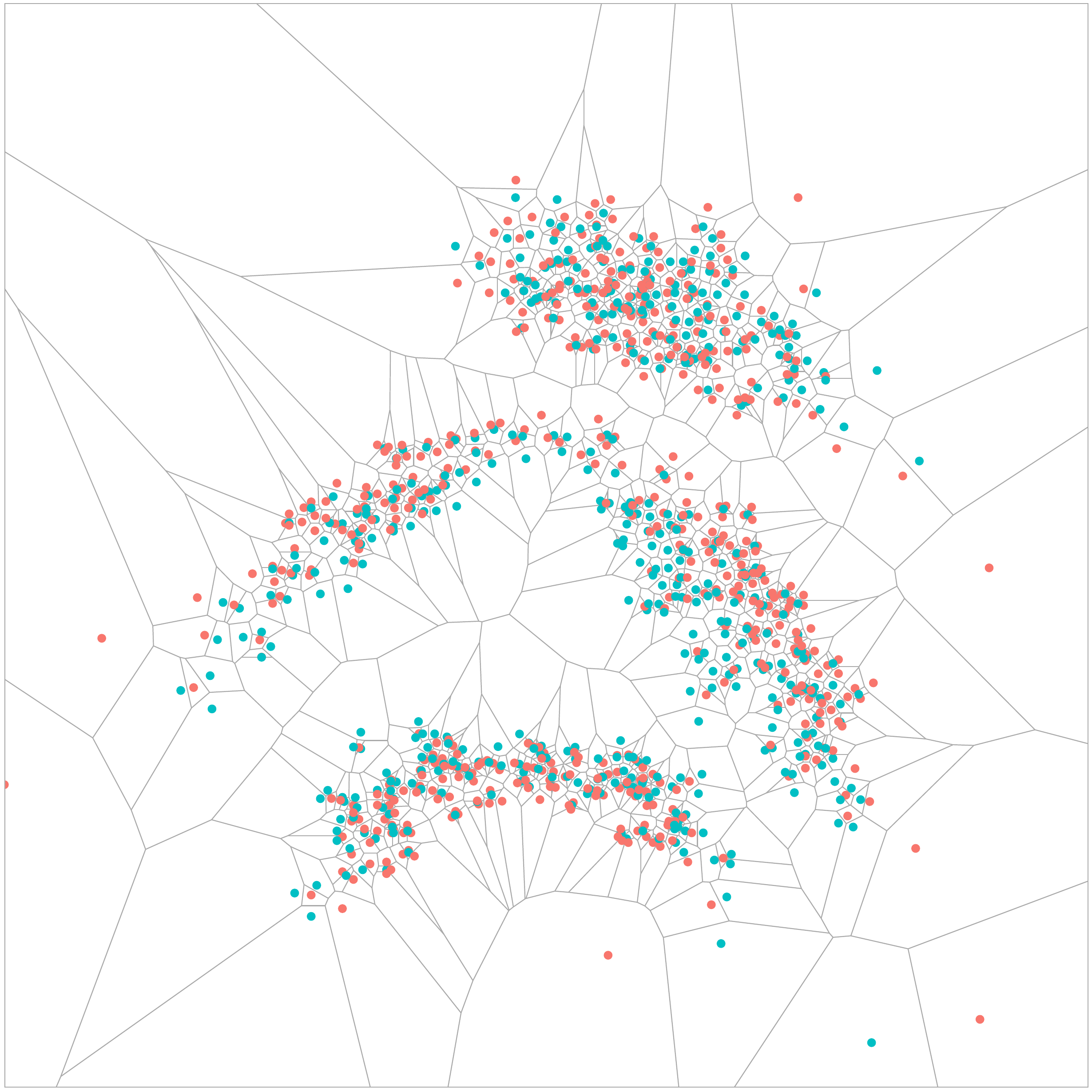}
        \caption{\RSS (1025 pts)}\label{fig:banana:rss}
    \end{subfigure}%
    \begin{subfigure}[b]{.20\linewidth}
        \centering\includegraphics[width=.95\textwidth]{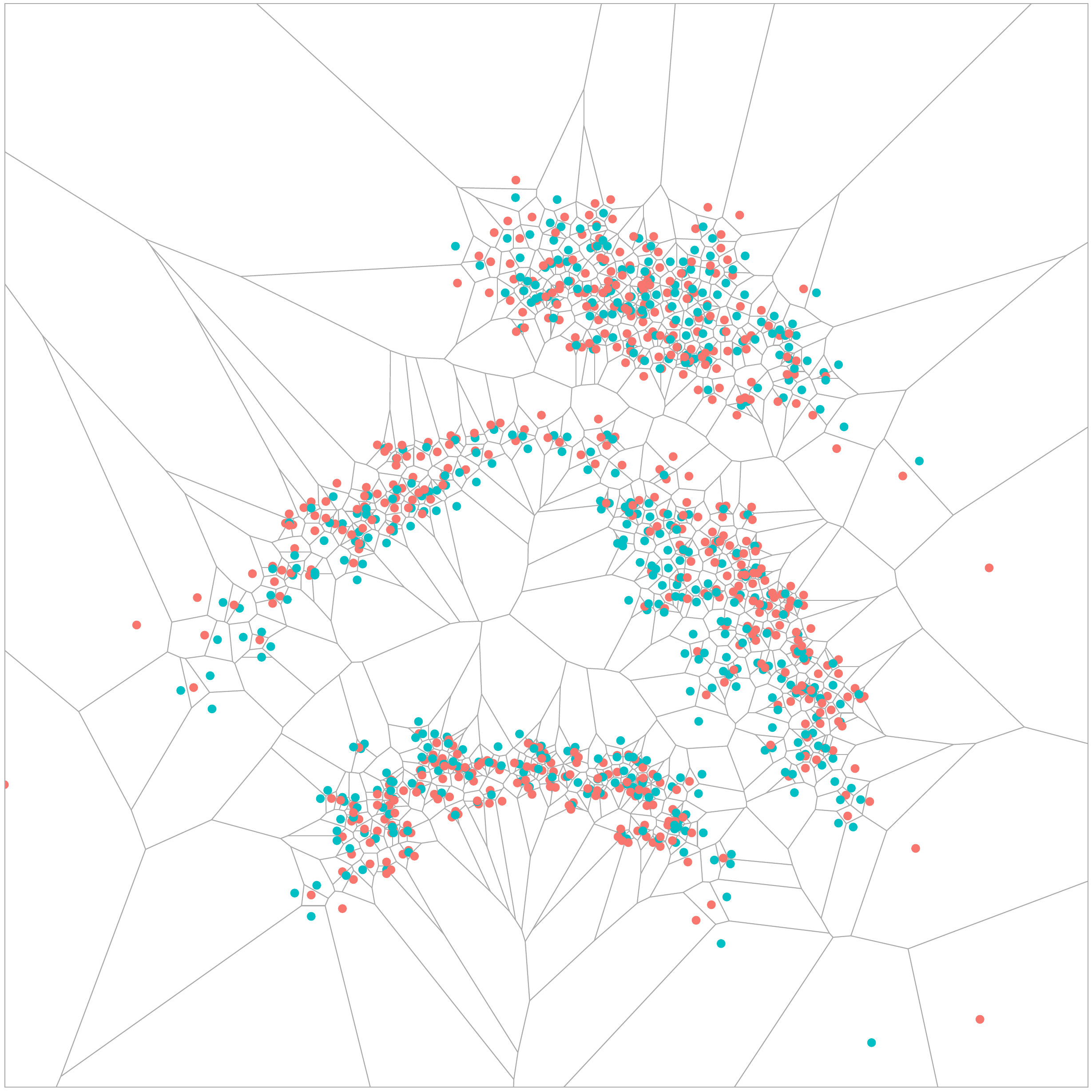}
        \caption{\VSS (1027 pts)}\label{fig:banana:vss}
    \end{subfigure}%
    \caption{Examples of the subsets selected by \FCNN, \MSS, \RSS, and \VSS, on two different training sets. Training set (a) is a set of uniformly distributed points in $\mathbb{R}^2$ of two classes: \emph{red} points lying inside a disk, and blue points lying outside. Training set (f) is a well-known benchmark from the \emph{UCI Machine Learning repository}, called \emph{Banana}, consisting of points in $\mathbb{R}^2$ of two classes, \emph{red} and \emph{blue}.}\label{fig:algexample} 
\end{figure*}

A subset $\setR \subseteq \setP$ is said to be \emph{consistent} if for all $p \in \setP$ its nearest neighbor in \setR is of the same class as $p$. Intuitively, \setR is consistent if and only if every point of \setP is correctly classified using the \NN rule over \setR. Formally, \emph{nearest-neighbor condensation} involves finding an (ideally small) consistent subset of \setP~\cite{Hart:2006:CNN:2263267.2267647}. 

Other criteria for condensation have been studied in the literature. One such criterion is known as \emph{selectivity}~\cite{ritter1975algorithm}. A subset $\setR \subseteq \setP$ is said to be \emph{selective} if and only if for all $p \in \setP$, its nearest neighbor in \setR is closer to $p$ than its nearest enemy in \setP. Clearly selectivity implies consistency, as the \NE distance in \setR of any point of \setP is at least its \NE distance in \setP. Note that neither consistency or selectivity imply that every query point of $\metricSet$ is correctly classified, just those in \setP.

The strongest criteria, known as \emph{Voronoi condensation}, consists of selecting all border points of \setP~\cite{gtoussaint84voronoi}. This guarantees the correct classification of any query point in $\metricSet$. In contrast, a consistent subset only guarantees correct classification of \setP. For the case when $\setP \subset \mathbb{R}^2$, an output-sensitive algorithm was proposed~\cite{Bremner2003} for finding all border points of \setP in $\bigOh(n \log{\numBorder})$ worst-case time. Unfortunately, it is not known how to generalize this algorithm to higher dimensions, and a straightforward approach suffers from the super-linear worst-case size of the Delaunay triangulation.

In general, it has been shown that the problems of computing consistent and selective subsets of minimum cardinality are both NP-complete~\cite{Wilfong:1991:NNP:109648.109673,Zukhba:2010:NPP:1921730.1921735,khodamoradi2018consistent}.
Thus, most research has focused on practical heuristics. For comprehensive surveys, see \cite{DBLP:conf/jcdcg/Toussaint02,Toussaint02proximitygraphs,jankowski2004comparison}.
\CNN~(Condensed Nearest-Neighbor)~\cite{Hart:2006:CNN:2263267.2267647} was the first algorithm proposed for computing consistent subsets. Even though it has been widely cited in the literature, \CNN suffers from several drawbacks: its running time is cubic in the worst-case, and the resulting subset is \emph{order-dependent}, meaning that the result is determined by the order in which points are considered by the algorithm. Alternatives include \FCNN (Fast \CNN)~\cite{angiulli2007fast} and \MSS (Modified Selective Subset)~\cite{barandela2005decision}, which produce consistent and selective subsets respectively. Both algorithms run in $\bigOh(n^2)$ worst-case time, and are order-independent. These algorithms are considered the state-of-the-art for the \NN condensation problem, subject to achieving these properties. While such heuristics have been extensively studied experimentally~\cite{jankowski2004comparison,Garcia:2012:PSN:2122272.2122582}, theoretical results are scarce. Unfortunately, to the best of our knowledge, no bounds are known for the size of the subsets generated by any of these heuristics.

More recently, an approximation algorithm called NET~\cite{gottlieb2014near} was proposed, along with almost matching hardness lower bounds for the problem. The idea is to compute a $\gamma$-net of \setP, with $\gamma$ equal to the minimum \NE distance in \setP, implying that the resulting subset is consistent. Unfortunately, while NET has provable worst-case performance, this approach allows little room for condensation, and in practice, the resulting subset can be too large. For example, on the training set in Figure~\ref{fig:circle:dataset}, \NET selects a subset of over 90\% of the points, while other algorithms select only 3\% of the points. 

%
%
%
\subsection{Our contributions}

In this paper, we propose two new \NN condensation algorithms, called \RSS and \VSS. We will establish asymptotically tight upper-bounds on the sizes of their selected subsets. Moreover, we prove that these algorithms have similar complexity to the popular state-of-the-art algorithms \FCNN and \MSS. Additionally, we also analyze the selection size of \FCNN and \MSS.
To the best of our knowledge, these represent the first theoretical results on \emph{practical} \NN condensation algorithms. The following is a summary of our contributions.

\begin{center}
\begin{tabular}{||c|c||} 
    \hline
    Algorithm & Selection size \\ [0.5ex] \hline\hline
    \RSS  & $\bigOh(\numNE\ c^{d-1})$ \\ \hline
    \VSS  & $\leq \numBorder$ \\ \hline
    \MSS~\cite{barandela2005decision}  & $\Omega(1/\eps)$ w.r.t. $\numNE$ and $\numBorder$ \\ \hline
    \FCNN~\cite{angiulli2007fast} & $\Omega(\numBorder)$ \\ \hline
\end{tabular}
\end{center}


%
%
%
\section{Results on condensation size}

One of the most significant shortcomings in research on practical \NN condensation algorithms is the lack of theoretical results on the sizes of the selected subsets. Typically, the performance of these heuristics has been established experimentally.

We establish bounds with respect to the size of two well-known and structured subsets of points: (a) the set of all \NE points of \setP of size $\numNE$, and (b) the set of border points of \setP of size $\numBorder$. Throughout the paper, we refer equally to the algorithms and their selected subsets.

%
%
%
\subsection{The state-of-the-art}

Let's begin our analysis with a state-of-the-art algorithm for the problem: \MSS or \emph{Modified Selective Subset} (see Algorithm~\ref{alg:mss}). The selection process of the algorithm can be simply described as follows: for every $p \in \setP$, \MSS selects the point with smaller \NE distance contained inside the \NE ball of $p$.

Clearly, this approach computes a selective subset of \setP, which by definition, is order-independent. \MSS can be implemented in $\bigOh(n^2)$ worst-case time. Unfortunately, the selection criteria of \MSS can be too strict, requiring one particular point to be added for each point $p \in \setP$. Note that any point inside the \NE ball of $p$ suffices for achieving selectiveness. In practice, this can lead to much larger subsets than needed.

\begin{algorithm}
 \DontPrintSemicolon
 \vspace*{0.1cm}
 \KwIn{Initial training set \setP}
 \KwOut{Condensed training set $\MSS \subseteq \setP$}
 Let $\left\lbrace p_i \right\rbrace^n_{i=1}$ be the points of \setP sorted in increasing order of NE distance $\dne{p_i}$\;
 $\MSS \gets \varnothing$\;
 $S \gets \setP$\;
 \ForEach{$p_i \in \setP$, where $i = 1\dots n$}{
  add $\gets \emph{false}$\;
  \ForEach{$p_j \in \setP$, where $j = i\dots n$}{
   \If{$p_j \in S\ \wedge\ \dist{p_j}{p_i} < \dne{p_j}$}{
    $S \gets S \setminus \left\lbrace p_j \right\rbrace$\;
    add $\gets \emph{true}$\;
   }
  }
  \If{\textup{add}}{
   $\MSS \gets \MSS \cup \left\lbrace p_i \right\rbrace$\;
  }
 }
 \KwRet{\MSS}
 \vspace*{0.1cm}
 \caption{Modified Selective Subset}
 \label{alg:mss}
\end{algorithm}

This intuition is formalized in the following theorem. Here we show that the subset selected by \MSS can select a subset of unbounded size as a function of $\numNE$ or $\numBorder$.

\begin{theorem}
Given $0<\eps<1$, there exists a training set $\setP \subset \mathbb{R}^d$ with a constant number of \NE and border points such that \MSS selects $\Omega(1/\eps)$ points.
\end{theorem}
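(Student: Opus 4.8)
The plan is to build, for each $0<\eps<1$, an explicit point set in the plane (so $d=2$; larger $d$ follows by embedding) in which \MSS is forced to output one point per ``level'' of a geometric gadget with $m=\Theta(1/\eps)$ levels, while the class boundary stays trivial. The governing idea is the remark preceding the theorem: \MSS keeps, for every $p$, the \emph{single} point of smallest \NE distance lying in the \NE ball of $p$, so if these minimizers are all distinct we obtain $m$ selected points even though one common point would already be selective. The delicate part is realizing this with only $\bigOh(1)$ \NE points and $\bigOh(1)$ border points, which rules out the naive ``points arranged around one enemy'' construction: there a packing argument caps the selection by a constant depending only on $d$.

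Concretely I would use two enemies. Place a blue point $b$ at the origin and $m$ red \emph{base} points $q_1,\dots,q_m$ on the positive $x$-axis at strictly increasing abscissae $D_1<\dots<D_m$, so that $\dne{q_i}=\dist{q_i}{b}=D_i$ and the base points are collinear with $b$. Collinearity controls the border count: for every $i\ge 2$, any circle through $b$ and $q_i$ has the open chord $bq_i$ in its interior, and $q_1$ lies on that chord, so no empty circle through $b,q_i$ exists and $q_i$ is \emph{not} Delaunay-adjacent to $b$; only $q_1$ is. A single enemy, however, makes the $q_i$'s \NE balls nested (monotone in position), collapsing every minimizer to $q_1$. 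To break this I add a second blue point $b'$, far from the $q_i$'s, together with $m$ red \emph{probe} points $p_1,\dots,p_m$ whose nearest enemy is $b'$, chosen so that the \NE ball of $p_i$ (radius $\dist{p_i}{b'}$, with $b'$ on its boundary) meets the $x$-axis in an interval $(\alpha_i,\beta_i)$ with $\alpha_i\in(D_{i-1},D_i)$ and $\beta_i>D_m$. Such a ball is realizable as the disk through $b',(\alpha_i,0),(\beta_i,0)$, and I make every probe \NE distance exceed $D_m$ so that no probe can undercut a base point. Setting $m=\lceil 1/\eps\rceil$ gives the target count.

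With this gadget the \MSS trace is routine. Processing in increasing \NE distance, the base points come first. Because the \NE ball of $p_i$ excludes $q_1,\dots,q_{i-1}$ (they lie left of $\alpha_i$) and contains only red points of \NE distance $\ge D_i$, the probe $p_i$ stays in $S$ until $q_i$ is processed; at that moment $q_i$ lies in the \NE ball of $p_i$ and is the smallest-\NE-distance point there, so $q_i$ covers $p_i$ and is added to \MSS. This holds for every $i$ even though the $q_i$ may cover one another first, since a covered point is still added when it later covers some surviving point. Hence all $m$ base points are selected and $|\MSS|\ge m=\Omega(1/\eps)$, while $\numNE=\bigOh(1)$ because the only nearest enemies are $b$, $b'$, and the reds nearest to $b$ and to $b'$.

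The step I expect to be the real work is bounding $\numBorder$ by a constant. The chord argument limits $b$'s red neighbours among the base points to $q_1$, but I must also show that the probes and the second enemy do not create $\omega(1)$ bichromatic Delaunay edges: namely that $b'$ is Delaunay-adjacent to only $\bigOh(1)$ probes (for which I would place the probes along a single ray from $b'$ and reuse the chord argument), that $b$ is adjacent to only $\bigOh(1)$ probes, and that $b'$ is adjacent to only $\bigOh(1)$ base points. Making these adjacency bounds simultaneously compatible with the carving requirement on the probe balls --- choosing the positions of $b'$ and the $p_i$ so each base point receives its own separating interval while all off-axis cross-colour adjacencies stay $\bigOh(1)$ --- is where the geometric bookkeeping concentrates; everything else reduces to the two elementary facts that a chord of a disk is interior to it and that three points determine the carving disk.
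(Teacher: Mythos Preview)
Your diagnosis of the obstacle is correct --- with a single enemy on the line the \NE balls of the collinear points nest and every minimizer collapses to $q_1$ --- but your remedy is far heavier than needed, and the part you flag as ``the real work'' (bounding $\numBorder$) is precisely what the paper's construction makes trivial. The paper places the second enemy \emph{on the same line, on the opposite side} of the many points. Concretely: red points $r_1,r_2$ at positions $0$ and $1$, and blue points $b_i$ at positions $i\eps/4$ for $i=1,\dots,3/\eps$. Since everything is collinear, Delaunay edges join only consecutive points, so the border set is exactly $\{r_1,b_1,b_{3/\eps},r_2\}$ --- four points, with no cross-adjacency bookkeeping at all.

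The selection mechanism is the one you were aiming for. For $i\le 2/\eps$ the nearest enemy of $b_i$ is $r_1$, the \NE balls nest, and the common minimizer is $b_1$. For $i>2/\eps$ the nearest enemy switches to $r_2$; the \NE ball of $b_i$ is the interval $(i\eps/2-1,\,1)$, whose left endpoint moves right as $i$ increases. The smallest-\NE-distance blue point inside is $b_{2j+1}$ with $j=i-2/\eps$, and these are pairwise distinct as $i$ ranges over roughly $(2/\eps,5/2\eps)$, giving $\Omega(1/\eps)$ selected points.

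Your two-enemy, two-dimensional gadget is not evidently wrong --- the \MSS trace you describe is correct, and each $q_i$ is indeed forced in by the probe $p_i$ --- but it is incomplete exactly where you say it is. Once you pin the probes to a single ray from $b'$ so that only one of them is Delaunay-adjacent to $b'$, their \NE balls form a pencil of disks internally tangent at $b'$ and hence are nested; reconciling that nesting with your carving constraints on $(\alpha_i,\beta_i)$, while simultaneously ensuring that $b$ sees $O(1)$ probes and $b'$ sees $O(1)$ base points in the Delaunay graph, is a genuine geometric balancing act that you have not carried out. All of that disappears with the collinear two-enemy construction, which you were one small step away from.
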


\begin{proof}
Recall that for each point in \setP, the \MSS algorithm selects the point inside its \NE ball with minimum \NE distance. Given a parameter $0<\eps<1$, we construct a training set in $d$-dimensional Euclidean space, as illustrated in Figure~\ref{fig:mss:original}.

Create two points $r_1$ and $r_2$, and assign them to the class of \emph{red} points. Without lost of generality, the distance between these two points is 1. Let $\vec{u}$ be the unit vector from $r_1$ to $r_2$, create additional points $b_i = r_1 + \frac{i\eps}{4}\vec{u}$ for $i = \{ 1,2,\dots, 3/\eps\}$. Assign all $b_i$ points to the class of \emph{blue} points. The set of all these points constitute the training set \setP. 
It is easy to prove that \setP has only 4 \NE points and 4 border points, corresponding to $r_1, r_2, b_1$ and $b_{3/\eps}$.

\begin{figure}[h!]
    \centering
    \begin{subfigure}[b]{\linewidth}
        \centering\includegraphics[width=\textwidth]{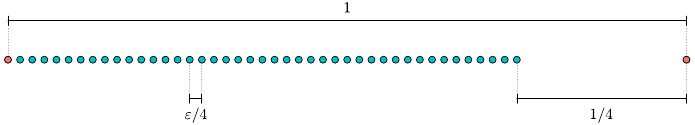}
        \caption{Initial training set of collinear points, where both the number of \NE points and the number of border points equal to 4. That is, $\numNE = \numBorder = 4$.}
		\label{fig:mss:original}
    \end{subfigure}
    
    \bigskip
    \begin{subfigure}[b]{\linewidth}
        \centering\includegraphics[width=\textwidth]{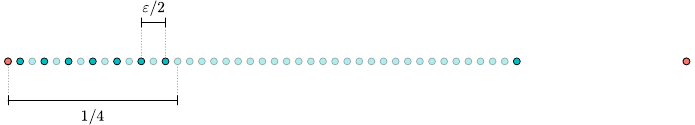}
        \caption{Subset of points computed by \MSS from the original training set (fully colored points belong to the subset, while faded points do not). The size of the subset is $\Omega(1/\eps)$.}
		\label{fig:mss:subset}
    \end{subfigure}
    \caption{Unbounded example for \MSS w.r.t. $\numNE$ and $\numBorder$.}
    \label{fig:mss}
\end{figure}

Let's discuss which points are added by \MSS for each point in \setP (see Figure~\ref{fig:mss:subset}). For points $r_1$ and $r_2$, the only points inside their \NE balls are themselves, so both $r_1$ and $r_2$ belong to the subset selected by \MSS.
For points $b_i$ with $i \leq 2/\eps$, the point with minimum \NE distance contained inside their \NE ball is $b_1$, which is also added to the subset.
Now, consider the points $b_i$ with $2/\eps < i < 5/2\eps$. Let $j = i-2/\eps$, it is easy to prove that the point with minimum \NE distance inside the \NE ball of $b_i$ is $b_{2j+1}$ (see Figure~\ref{fig:mss:subset}). Therefore, this implies that the number of points selected by \MSS equals $5/2\eps-2/\eps = 1/2\eps = \Omega(1/\eps)$.
\end{proof}

%
%
%
\subsection{A better approach}

We propose a new algorithm, called \RSS or \emph{Relaxed Selective Subset}, with the idea relaxing the selection process of \MSS, while still computing a selective subset. For a given point $p \in \setP$, while \MSS requires to add the point with smallest \NE distance inside the \NE ball of $p$, in \RSS any point inside the \NE ball $p$ suffices.

\begin{algorithm}
 \DontPrintSemicolon
 \vspace*{0.1cm}
 \KwIn{Initial training set \setP}
 \KwOut{Condensed training set $\RSS \subseteq \setP$}
 $\RSS \gets \varnothing$\;
 Let $\left\lbrace p_i \right\rbrace^n_{i=1}$ be the points of \setP sorted in increasing order of NE distance $\dne{p_i}$\;
 \ForEach{$p_i \in \setP$, where $i = 1\dots n$}{
  \If{$\dnn{p_i,\RSS} \geq \dne{p_i}$}{
   $\RSS \gets \RSS \cup \left\lbrace p_i \right\rbrace$\;
   }
 }
 \KwRet{\RSS}
 \vspace*{0.1cm}
 \caption{Relaxed Selective Subset}
 \label{alg:rss}
\end{algorithm}

The idea is rather simple (see Algorithm~\ref{alg:rss}). Points of \setP are examined in increasing order with respect to their NE distance, and we add any point whose \NE ball contains no point previously added by the algorithm. This tends to select points close to the decision boundary of \setP (see Figure~\ref{fig:circle:rss}), as points far from the boundary are examined later in the selection process, and are more likely to already contain points inside their \NE ball.

\begin{theorem}
\RSS is order-independent and computes a selective subset of \setP in $\bigOh(n^2)$ worst-case time.
\end{theorem}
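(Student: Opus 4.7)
The theorem bundles three claims: order-independence, selectiveness of the output, and an $O(n^2)$ running time. Each is short, so the plan is really to spell them out cleanly.

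For \textbf{order-independence}, the key observation is that the algorithm does not process \setP in the order it is given; it re-sorts by the data-dependent key $\dne{p_i}$. Since this key is a function of \setP alone, the processing order (and therefore the whole execution) is determined by \setP, modulo ties. The only subtlety is what happens when several points share a common \NE distance; the plan is to break such ties by a canonical rule (for instance, lexicographic order on the coordinates), so that the output is a function of \setP and not of the input permutation. I do not expect this step to be hard, but it is the one that requires the most care in wording.

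For \textbf{selectiveness}, the plan is to show that every $p \in \setP$ has a same-class point of \RSS within distance strictly less than $\dne{p}$. Two cases:
\begin{itemize}
    \item If $p \in \RSS$, then $p$ itself is in \RSS at distance $0 < \dne{p}$.
    \item If $p \notin \RSS$, then at the iteration when $p$ was examined the guard $\dnn{p,\RSS} \geq \dne{p}$ failed, so some $q \in \RSS$ satisfies $\dist{p}{q} < \dne{p}$. This $q$ cannot be an enemy of $p$, for then the definition of $\dne{p}$ would give $\dne{p} \leq \dist{p}{q}$, a contradiction. Hence $q$ is a same-class neighbor of $p$ in \RSS at distance less than $\dne{p}$.
\end{itemize}
Combining the two cases gives $\dnn{p,\RSS} < \dne{p}$ for every $p \in \setP$, which is exactly selectiveness.

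For the \textbf{running time}, the plan is: (i) precompute $\dne{p}$ for every $p \in \setP$ in $O(n^2)$ time by an all-pairs scan; (ii) sort \setP by \NE distance in $O(n \log n)$ time; (iii) execute the main loop, which performs $n$ iterations, each spending $O(|\RSS|) = O(n)$ time to test the condition $\dnn{p_i,\RSS} \geq \dne{p_i}$ by scanning the current \RSS. The dominant cost is $O(n^2)$, matching \FCNN and \MSS. No data structure beyond an array is required, so there is no hidden obstacle here.
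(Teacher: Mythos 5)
Your proposal is correct and follows essentially the same route as the paper: selectiveness from the fact that every point is either selected or has a selected point strictly inside its \NE ball, order-independence from the initial sort by \NE distance, and the $\bigOh(n^2)$ bound from the all-pairs \NE computation plus a linear scan of the current subset in each of the $n$ iterations. Your treatment is slightly more careful than the paper's on two minor points (canonical tie-breaking for equal \NE distances, and the observation that a point inside the \NE ball cannot be an enemy), but the argument is the same.
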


\begin{proof}
By construction, every point in \setP was either added into \RSS, or has a point in \RSS inside its \NE ball. Therefore, \RSS is selective. The order-independence follows from the initial sorting step.

Let's analyze the time complexity of \RSS. The initial step requires $\bigOh(n^2)$ time for computing the NE distances of each point in \setP, plus additional $\bigOh(n\log{n})$ time for sorting the points according to such distances. The main loop iterates through each point in \setP, and searches their nearest neighbor in the current subset, incurring into additional $\bigOh(n^2)$ time. Finally, the worst-case time complexity of the algorithm is $\bigOh(n^2)$.
\end{proof}


\begin{theorem}
\label{thm:rss:size}
\RSS selects at most $\bigOh(\numNE\ (3/\pi)^{d-1})$ points.
\end{theorem}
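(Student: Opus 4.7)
The plan is to partition \RSS according to the nearest enemy of each selected point, bound the number of selected points sharing a given nearest enemy by an angular packing argument, and then sum over all nearest enemies. Specifically, for each of the \numNE distinct nearest-enemy points $e$ of \setP, let $R_e = \{p \in \RSS : \nenemy{p} = e\}$. These sets partition \RSS, so it suffices to show $|R_e| = \bigOh((3/\pi)^{d-1})$ and multiply by \numNE.

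The geometric heart of the argument is to show that for any two distinct $p_i, p_j \in R_e$, the angle $\angle p_i e p_j$ is at least $\pi/3$. Assume \wwlog that $p_i$ was added to \RSS before $p_j$, so $\dne{p_i} \leq \dne{p_j}$, and write $r_i = \dne{p_i} = \dist{p_i}{e}$ and $r_j = \dne{p_j} = \dist{p_j}{e}$. When $p_j$ is examined, $p_i$ already lies in the current \RSS, so the admission rule of Algorithm~\ref{alg:rss} forces $\dist{p_i}{p_j} \geq r_j$. Applying the law of cosines in triangle $e\,p_i\,p_j$ gives
\[
r_j^2 \;\leq\; \dist{p_i}{p_j}^2 \;=\; r_i^2 + r_j^2 - 2 r_i r_j \cos(\angle p_i e p_j),
\]
which rearranges to $\cos(\angle p_i e p_j) \leq r_i/(2 r_j) \leq 1/2$, and hence $\angle p_i e p_j \geq \pi/3$.

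With the angular separation in hand, I would project each $p \in R_e$ to its direction $(p-e)/\|p-e\|$ on the unit sphere $S^{d-1}$ centered at $e$. The previous step shows these directions are pairwise separated by angle at least $\pi/3$, so the spherical caps of angular radius $\pi/6$ around them are pairwise disjoint on $S^{d-1}$. The bound on $|R_e|$ then follows by comparing the total surface area of $S^{d-1}$ to the area of a single cap.

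The main obstacle I expect is obtaining exactly the stated constant $(3/\pi)^{d-1}$ from the packing step: the angular separation argument is clean, but converting it into the advertised form requires a careful lower bound on the surface area of a spherical cap of angular radius $\pi/6$ on $S^{d-1}$. A natural route is to use $\sin\phi \geq 2\phi/\pi$ on $[0,\pi/2]$ to lower bound $\int_0^{\pi/6}\sin^{d-2}\phi\,d\phi$ and then divide into the standard expression for the area of $S^{d-1}$; summing the resulting per-enemy bound over the \numNE nearest enemies yields the theorem.
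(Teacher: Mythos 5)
Your proposal is correct and follows essentially the same route as the paper: charge each selected point to its nearest enemy, show any two points charged to the same enemy subtend an angle of at least $\pi/3$ at that enemy (the paper argues via the largest side of the triangle, you via the law of cosines --- the same fact), and finish with a spherical-cap packing bound summed over the $\numNE$ enemies. The only difference is that you spell out the packing constant more carefully than the paper, which simply invokes a ``standard packing argument.''
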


\begin{proof}
The proof follows by a charging argument on each \NE point of \setP.
Consider a \NE point $p \in \setP$, and let $\RSS_p$ be the set of points selected by \RSS such that $p$ is their \NE. Let $p_i, p_j \in \RSS_p$ be two such points, and \wwlog say that $\dne{p_i} \leq \dne{p_j}$. By construction of the algorithm, we also know that $\dist{p_i}{p_j} \geq \dne{p_j}$. Now, consider the triangle $\triangle p p_i p_j$. Clearly, the side $p p_i$ is the larger of such triangle, and therefore, the angle $\angle p_i p p_j \geq \pi/3$.
Meaning that the angle between any two points in $\RSS_p$ with respect to $p$ is at least $\pi/3$.

By a standard packing argument, this implies that $|\RSS_p| = \bigOh((3/\pi)^{d-1})$. Finally, we obtain that $|\RSS| = \sum_p |\RSS_p| = \numNE\ \bigOh((3/\pi)^{d-1})$.
\end{proof}

For constant dimension $d$, the size of \RSS is $\bigOh(\numNE)$. Therefore, the following result implies that the upper-bound on \RSS is tight up to constant factors. Furthermore, it implies that this is the best upper-bound we can hope to achieve in terms of $\numNE$.

\begin{theorem}[Lower-bound]
There exists a training set $\setP \subset \mathbb{R}^d$ with $\numNE$ \NE points, for which any consistent subset contains $\Omega( \numNE \kern+1pt c^{d-1} )$ points, for some constant $c$.
\end{theorem}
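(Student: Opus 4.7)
The plan is to build $\setP$ as $g = \numNE/2$ well-separated \emph{gadgets}, each contributing exactly two nearest-enemy points while forcing $m+1 = \Omega(c^{d-1})$ points into any consistent subset. A single gadget consists of a red center $r_i$ together with $m$ blue points placed on the unit sphere around $r_i$ as a \emph{maximal} packing whose pairwise Euclidean distances all strictly exceed $1$. The standard spherical-cap volumetric argument (using caps of angular radius $\pi/6$) gives $m = \Omega(c^{d-1})$ for some constant $c > 1$, and maximality yields the covering property used below: for every unit vector $\vec{u}$, some offset $x^*$ in the packing satisfies $\vec{u}\cdot x^* \geq \cos(\pi/3) = 1/2$.

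I would place the $g$ gadget centers in general position at pairwise distance $D$ for $D$ sufficiently large (say $D > 10$), and apply an infinitesimal generic perturbation. Then the nearest enemy of each blue is its own gadget's center, the nearest enemy of each $r_i$ is a unique in-gadget blue, and the total number of nearest-enemy points is exactly $2g = \numNE$.

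Let $\setR$ be any consistent subset. A local argument shows that each gadget is either \emph{fully absorbed} into $\setR$ or contributes \emph{nothing}: once $r_i \in \setR$, each blue $b$ in the gadget has $r_i$ at distance $1$, every other in-gadget blue at distance $> 1$, and every out-of-gadget point at distance $\geq D - 2 > 1$, so the only candidate for a blue nearest neighbor of $b$ is $b$ itself; symmetrically, any in-gadget blue in $\setR$ forces $r_i \in \setR$ and hence all other blues. To rule out the ``nothing'' case, suppose gadget $i$ contributed no points. Then $r_i$'s nearest neighbor in $\setR$ would lie in some fully absorbed gadget $j$. Letting $\vec{u}$ be the unit vector from $r_j$ toward $r_i$ and $D^* = \dist{r_i}{r_j}$, the covering property furnishes a blue $r_j + x^*$ in gadget $j$ with $\vec{u}\cdot x^* \geq 1/2$, and the squared distance from $r_i$ to this blue equals $(D^*)^2 - 2D^*(\vec{u}\cdot x^*) + 1 \leq (D^*)^2 - D^* + 1 < (D^*)^2$. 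So this blue is strictly closer to $r_i$ than $r_j$ itself, making $r_i$'s nearest neighbor in $\setR$ blue and contradicting consistency. Hence every gadget must be fully absorbed, giving $|\setR| \geq g(m+1) = \Omega(\numNE \, c^{d-1})$.

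The principal technical content is the maximal sphere-packing bound $m = \Omega(c^{d-1})$, which is classical; the remaining verifications reduce to checking that $D$ is large enough that cross-gadget distance terms never interfere with the within-gadget comparisons above, which I expect to be the only mildly delicate step.
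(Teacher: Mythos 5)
Your proof is correct and is built on the same construction as the paper's: a red center surrounded by blue points on a unit sphere, repeated $\numNE/2$ times with well-separated centers, with the exponential factor $c^{d-1}$ coming from a standard sphere packing/covering bound. The counting mechanism differs in a way worth noting. The paper takes the blue points to be (in effect) the entire sphere and argues by \emph{covering}: the selected blues must cover the sphere with caps of height $1/2$, each of exponentially small measure. You instead take the blues to be a finite \emph{maximal packing} at pairwise distances strictly greater than $1$, so that each blue's only admissible same-class nearest neighbor is itself and every packing point is forced into $\setR$ individually. Your version yields a finite training set directly and makes explicit two things the paper glosses over: that the perturbed gadget has exactly two nearest-enemy points, and that ``sufficiently separated'' centers genuinely decouple the gadgets (your covering-property argument ruling out the empty-gadget case is precisely the missing check, since a priori $r_i$'s nearest neighbor in $\setR$ could be another red center). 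One small slip: disjoint caps of angular radius $\pi/6$ give the \emph{upper} bound on the packing size; the lower bound $m=\Omega(c^{d-1})$ you need comes from the dual direction --- maximality means the caps of angular radius $\pi/3$ about your packing points cover the sphere, and each such cap occupies at most a $(\sin(\pi/3))^{d-1}$ fraction of it. The conclusion is unaffected.
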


\begin{proof}
We construct a training set \setP in $d$-dimensional Euclidean space, which contains points of two classes: \emph{red} and \emph{blue}. Consider the following arrangement of points: create a red point $p$, and take \emph{every} point at distance 1 from $p$ as a blue point. Simply, the points on the surface of a unit ball centered at $p$. 

Take any consistent subset of this training set, and consider some point $p'$ in such subset, and the bisector between $p$ and $p'$. The intersection between this bisector and the unit ball centered at $p$ describes a cap of such ball of height $1/2$. Any point located inside this cap is closer to $p'$ than $p$, and hence, correctly classified. Clearly, by definition of consistency, all points in the ball must be covered by at least one cap. 
By a simple packing argument, we know such covering needs $\Omega(c^{d-1})$ points, for some constant $c$.

So far the training set constructed has only two nearest enemy points; the red point $p$, and one blue point closest to $p$ (assuming general position). Then, we can repeat this arrangement $\numNE/2$ times, using sufficiently separated center points. This generates a training set \setP with a number of \NE points equal to $\numNE$, for which any consistent subset has size $\Omega(\numNE \kern+1pt c^{d-1})$.
\end{proof}

%
%
%

Different parameters from $\numNE$ can be used to bound the selection size of condensation algorithms. Let's consider $\numBorder$, the number of border points in the training set \setP. From the example illustrated in Figure~\ref{fig:rss:example}, we know that \RSS can select more points than $\numBorder$ (see Figure~\ref{fig:rss:example-sel}). Repeating such arrangement forces \RSS to select $\Omega(\numBorder)$ points. Yet, the question remains, at most, how many more points than $\numBorder$ can this algorithm select?

\begin{figure}[h]
    \begin{subfigure}[b]{.5\linewidth}
        \centering\includegraphics[width=.9\textwidth]{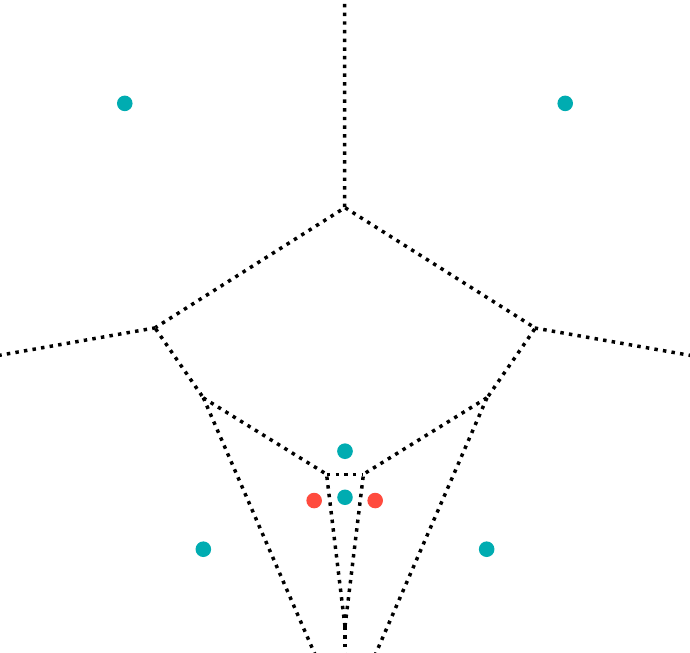}
        \caption{Point arrangement.}\label{fig:rss:example-set}
    \end{subfigure}%
    \begin{subfigure}[b]{.5\linewidth}
        \centering\includegraphics[width=.9\textwidth]{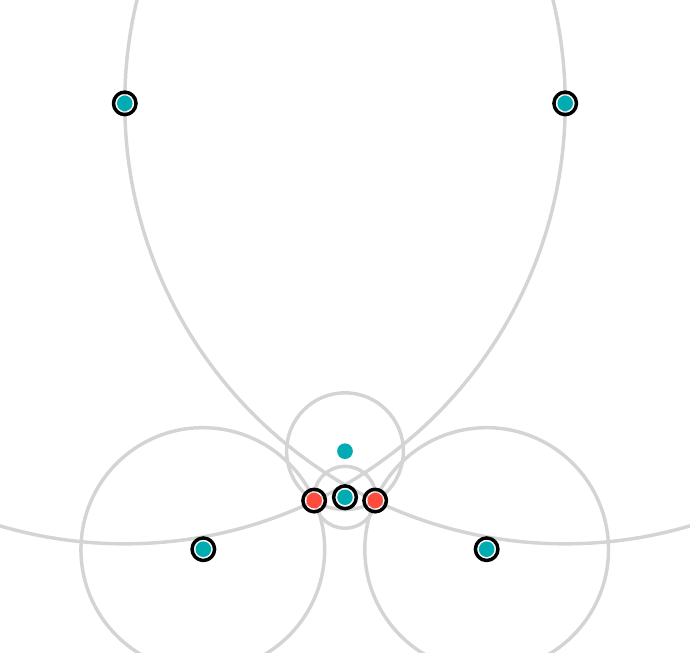}
        \caption{\RSS selection outlined.}\label{fig:rss:example-sel}
    \end{subfigure}%
    \vspace*{-1ex}
    \caption{Example where \RSS selects $\numBorder+1$ points.}\label{fig:rss:example}
\end{figure}

\begin{lemma}
\label{lemma:border}
The nearest enemy point of any point in \setP is a also a border point of \setP.
\end{lemma}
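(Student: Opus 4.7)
The goal is to show that, setting $q = \nenemy{p}$, the point $q$ is incident to some Delaunay edge whose other endpoint is an enemy of $q$. The plan is a Voronoi-walk argument along the segment $\overline{qp}$: near $q$ the nearest site of $\setP$ is $q$ itself, while at $p$ the nearest site is $p$ (since $p \in \setP$), so moving along the segment from $q$ toward $p$ we must exit the Voronoi cell of $q$ at some point strictly before reaching $p$.

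Let $x$ be the first such exit point on $\overline{qp}$. By definition of the Voronoi diagram, there is some $r \in \setP \setminus \{q\}$ with $|xq| = |xr|$ realizing the distance from $x$ to $\setP$. By Voronoi--Delaunay duality, the edge $qr$ lies in the Delaunay triangulation of $\setP$, so it suffices to show that $r$ is an enemy of $q$.

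The main obstacle, and the only place the hypothesis $q = \nenemy{p}$ is used in an essential way, is ruling out that $r$ is a friend of $q$. Suppose for contradiction that $r$ shares the class of $q$. Then $r$ has a different class from $p$, hence is an enemy of $p$, giving $|pr| \geq \dne{p} = |pq|$. On the other hand, because $x$ lies on $\overline{qp}$ we have $|xq| + |xp| = |pq|$, so the triangle inequality together with $|xr| = |xq|$ yields $|pr| \leq |xr| + |xp| = |pq|$. Equality must therefore hold throughout, which forces $x$ to lie on the segment $\overline{rp}$ as well; combined with $x$ on $\overline{qp}$ and $|xr| = |xq|$, this places $r$ and $q$ on the same ray from $p$ at equal distance from $x$, so $r = q$, contradicting $r \neq q$. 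Hence $r$ must be a friend of $p$, and thus an enemy of $q$, so the Delaunay edge $qr$ witnesses that $q$ is a border point of $\setP$.
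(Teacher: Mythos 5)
Your proof is correct and is essentially the paper's argument in dual form: the exit point $x$ where the segment $\overline{qp}$ leaves the Voronoi cell of $q=\nenemy{p}$ is precisely the center of the paper's maximal empty ball tangent to $\nenemy{p}$ with center on that segment, and your triangle-inequality step establishes the same containment (of this empty ball inside the \NE ball of $p$) that the paper uses to conclude that the second tangent point is a friend of $p$ and hence an enemy of $\nenemy{p}$. If anything, your explicit handling of the boundary case $\dist{p}{r}=\dne{p}$ (forcing $r=q$) is slightly more careful than the paper's bare assertion that $p^{*}$ lies inside the \NE ball of $p$ and therefore shares $p$'s class.
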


\begin{proof}
Take any point $p \in \setP$. Consider the \emph{empty} ball of maximum radius, tangent to point $\nenemy{p}$, and with center in the line segment between $p$ and $\nenemy{p}$. Being maximal, this ball is tangent to another point $p^{*} \in \setP$ (see Figure~\ref{fig:proof:border}).
Clearly, $p^{*}$ is inside the \NE ball of $p$, which implies that $p$ and $p^{*}$ belong to the same class, making $p^{*}$ and $\nenemy{p}$ enemies. By the empty ball property, this means that both $p^{*}$ and $\nenemy{p}$ are border points of \setP.
\end{proof}

\begin{figure}[h!]
    \centering
    \begin{subfigure}[b]{.45\linewidth}
        \centering\includegraphics[width=\textwidth]{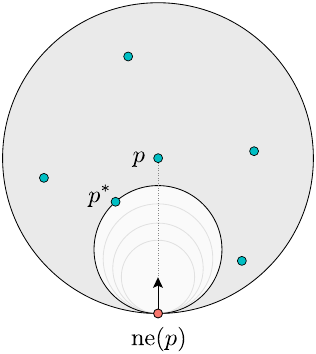}
        \caption{}
		\label{fig:proof:border}
    \end{subfigure}\hfill%
    \begin{subfigure}[b]{.45\linewidth}
        \centering\includegraphics[width=\textwidth]{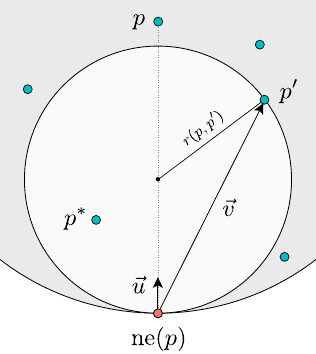}
        \caption{}
		\label{fig:vss:detail}
    \end{subfigure}
    \caption{(a) The largest empty ball tangent to $\nenemy{p}$ and center in $\overline{p\ \nenemy{p}}$, is also tangent to some point $p^{*}$, making $p^{*}$ and $\nenemy{p}$ border points. (b) Computing the radius of a ball with center in the line segment between $p$ and $\nenemy{p}$, and tangent to both $\nenemy{p}$ and $p'$.}
\end{figure}

From Lemma~\ref{lemma:border}, we know that in Euclidean space, the number of \NE points of \setP is at most the number of border points of \setP. That is, $\numNE \leq \numBorder$. While this implies an easy extension of the bound for \RSS, now in terms of $\numBorder$, it is unclear if the other factors can be improved.

Alternatively, this opens another idea for condensation. In order to prove Lemma~\ref{lemma:border}, we show that there exist at least one border point inside the \NE ball of any point $p \in \setP$. Therefore, any algorithm that only selects such border points, can guarantee to compute a selective subset of size at most $\numBorder$.
Consider then a modification of \RSS, where for each point $p_i \in \setP$, if no other point lying inside the \NE ball of $p$ has been added yet, instead of adding $p_i$ as \RSS does, we add a border point inside \NE ball of $p$. We call this new algorithm \VSS or \emph{Voronoi Selective Subset} (see Algorithm~\ref{alg:vss}).

\begin{algorithm}
 \DontPrintSemicolon
 \vspace*{0.1cm}
 \KwIn{Initial training set \setP}
 \KwOut{Condensed training set $\VSS \subseteq \setP$}
 $\VSS \gets \varnothing$\;
 Let $\left\lbrace p_i \right\rbrace^n_{i=1}$ be the points of \setP sorted in increasing order of NE distance $\dne{p_i}$\;
 \ForEach{$p_i \in \setP$, where $i = 1\dots n$}{
  \If{$\dnn{p_i,\RSS} \geq \dne{p_i}$}{
   Find a border point that lies inside the \NE ball of $p_i$ and add it to \VSS\;
   }
 }
 \KwRet{\VSS}
 \vspace*{0.1cm}
 \caption{Voronoi Selective Subset}
 \label{alg:vss}
\end{algorithm}

\begin{theorem}
\VSS computes a selective subset of \setP of size at most $\numBorder$ in $\bigOh(n^2)$ worst-case~time.
\end{theorem}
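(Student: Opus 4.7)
The plan is to establish the three claims in the theorem—selectivity, the size bound $|\VSS|\le \numBorder$, and the $\bigOh(n^2)$ running time—in that order. The first two follow almost directly from the algorithm's specification combined with Lemma~\ref{lemma:border}; the running time is the only genuinely delicate part, because the algorithm calls for locating a border point inside a specific ball without having the full list of border points precomputed.

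\textbf{Selectivity and size.} I would first observe that at the end of processing any $p_i$, the \NE ball of $p_i$ contains at least one point of \VSS: either the test $\dnn{p_i,\VSS}\geq \dne{p_i}$ failed, meaning such a point was already present (and since \VSS only grows, it remains present at the end), or the test succeeded, in which case the algorithm explicitly inserts a point lying inside that ball. This directly yields selectivity: $\dnn{p_i,\VSS}<\dne{p_i}$ for every $p_i\in\setP$. For the size bound, the key point is that every element added to \VSS is, by construction, a border point of \setP. Since \VSS is a set (so repeated insertions do not increase its cardinality), we get $\VSS\subseteq\{\text{border points of }\setP\}$, hence $|\VSS|\leq\numBorder$. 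The existence of a border point inside the \NE ball needed by the insertion step is exactly the content of Lemma~\ref{lemma:border}, so the algorithm is well defined.

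\textbf{Running time.} The preprocessing is routine: computing $\dne{p_i}$ for all $p_i$ takes $\bigOh(n^2)$ time by scanning enemies, and sorting by $\dne{\cdot}$ takes $\bigOh(n\log n)$. In the main loop, the test $\dnn{p_i,\VSS}\geq\dne{p_i}$ is a nearest-neighbor query against the current subset, costing $\bigOh(|\VSS|)=\bigOh(n)$ per iteration and $\bigOh(n^2)$ total. The main obstacle is the insertion step: naively identifying a border point requires knowledge of the Delaunay triangulation, whose size blows up in higher dimensions. I would circumvent this by turning the \emph{proof} of Lemma~\ref{lemma:border} into an algorithm. The argument there exhibits a specific border point as the witness $p^\ast$ of the largest empty ball tangent to $\nenemy{p_i}$ whose center lies on the segment $\overline{p_i\,\nenemy{p_i}}$; this $p^\ast$ is precisely the same-class point $p'\in\setP$ minimizing the radius of the ball with center on that segment tangent to both $\nenemy{p_i}$ and $p'$ (as illustrated in Figure~\ref{fig:vss:detail}, this radius is a closed-form function of $p_i$, $\nenemy{p_i}$, and $p'$).

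Therefore, the insertion step reduces to computing this radius over all same-class points of \setP and taking the minimizer, which takes $\bigOh(n)$ time per iteration and $\bigOh(n^2)$ total. Summing the preprocessing, the query phase, and the insertion phase gives the claimed $\bigOh(n^2)$ worst-case running time. The only subtlety I would highlight carefully is verifying that the minimizer of the radius expression is indeed a border point (i.e., that the constructed ball is genuinely empty and tangent to $\nenemy{p_i}$ from the segment), which is immediate from the maximality argument in Lemma~\ref{lemma:border}.
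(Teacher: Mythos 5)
Your proposal is correct and follows essentially the same route as the paper: selectivity and the size bound come directly from the fact that every inserted point is a border point inside the relevant \NE ball (via Lemma~\ref{lemma:border}), and the $\bigOh(n^2)$ bound is obtained by making that lemma's proof constructive—scanning for the point $p^{*}$ minimizing the radius $r(p,p')$ of the tangent ball centered on $\overline{p\,\nenemy{p}}$. Your write-up is if anything slightly more explicit than the paper's about why the test's failure still certifies selectivity and why the minimizer is guaranteed to be a border point.
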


\begin{proof}
By construction, for any point in $p \in \setP \setminus \VSS$ the algorithm selected one border point inside the \NE ball of $p$. This implies that the resulting subset is selective, and contains no more than $\numBorder$ points.

Now, we describe an efficient implementation of \VSS. Recall that for every point $p \in \setP \setminus \VSS$, the algorithm finds a border point inside its \NE ball.
Without loss of generality implement \VSS to compute the point $p^{*}$ that minimizes the radius of an empty ball tangent to both $\nenemy{p}$ and $p^{*}$, and center in the line segment between $p$ and $\nenemy{p}$.
For any given point $p'$ inside the \NE ball of $p$, denote $r(p,p')$ to be the radius of the ball tangent to $p'$ and $\nenemy{p}$ and center in the line segment between $p$ and $\nenemy{p}$. As illustrated in Figure~\ref{fig:vss:detail}, let vectors $\vec{u} = \frac{p-\nenemy{p}}{\|p-\nenemy{p}\|}$ and $\vec{v} = p'-\nenemy{p}$, the radius of this ball can be derived from the formula $r(p,p') = \| \vec{v} + r(p,p')\vec{u} \|$ as $r(p,p') = \frac{\vec{v}\cdot\vec{v}}{2\vec{u}\cdot\vec{v}}$.

As $p^{*}$ is defined as the point that minimizes such radius, a simple scan over the points of \setP suffices to identify the corresponding $p^{*}$ for any point $p \in \setP$. Therefore, this implies that \VSS can be computed in $\bigOh(n^2)$ worst-case time.
\end{proof}

%
%
%
\subsection{What about FCNN?}

\begin{figure*}[t]
    \begin{minipage}{0.7\linewidth}
        \begin{subfigure}[b]{\linewidth}
            \centering\includegraphics[width=\textwidth]{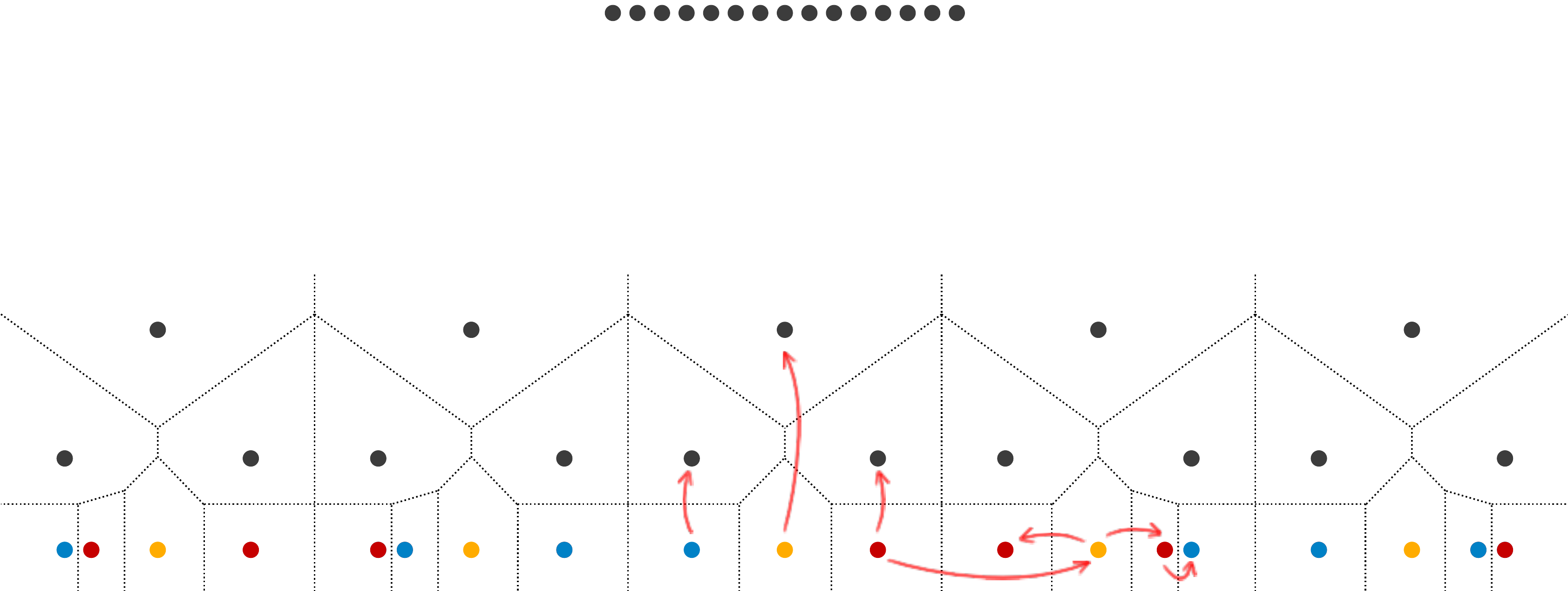}
            \caption{Entire arrangement of points.}\label{fig:fcnn:all}
        \end{subfigure}%
    \end{minipage}
    \hspace{6ex}
    \begin{minipage}{0.22\linewidth}
        \begin{subfigure}[b]{\linewidth}
            \centering\includegraphics[width=\textwidth]{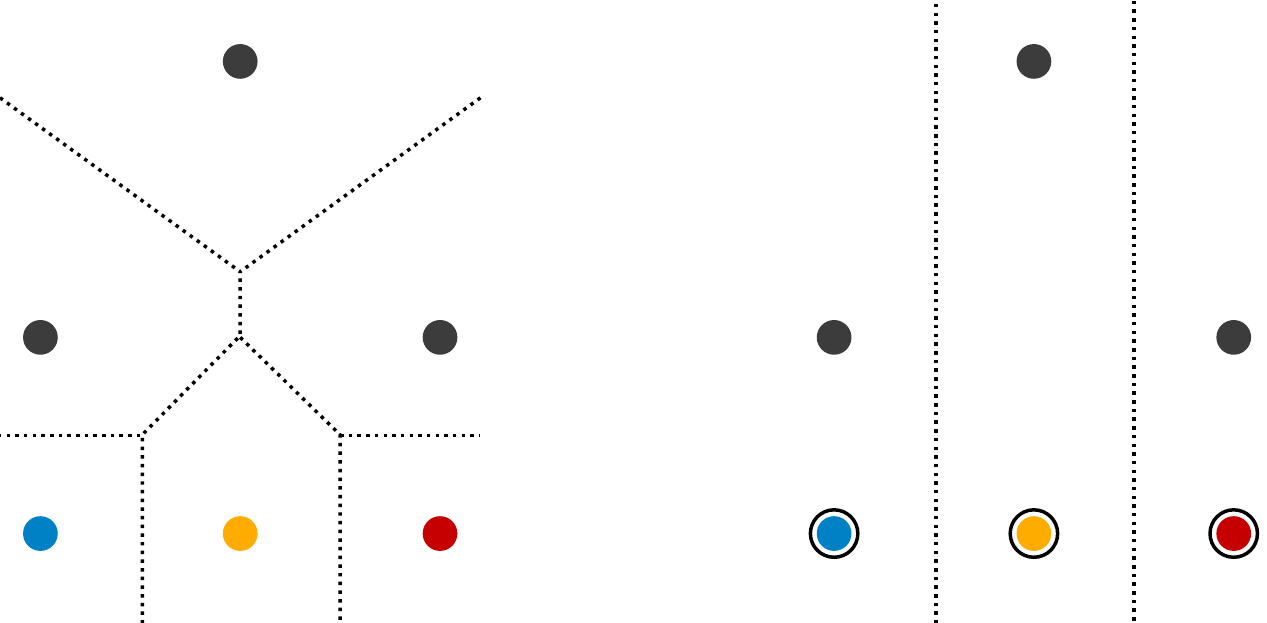}
            \caption{Middle arrangement.}\label{fig:fcnn:mid}
        \end{subfigure}\\[4ex]
        \bigskip
        \begin{subfigure}[b]{\linewidth}
            \centering\includegraphics[width=\textwidth]{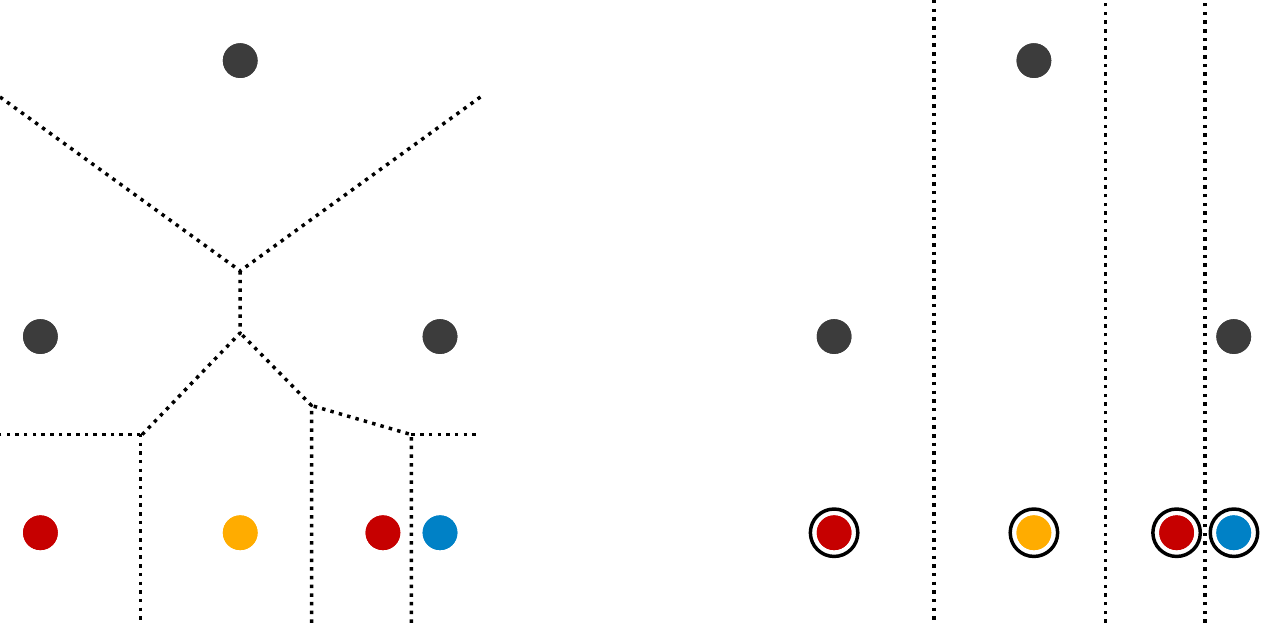}
            \caption{Side arrangement.}\label{fig:fcnn:side}
        \end{subfigure}%
    \end{minipage}
    \vspace*{-3ex}
    \caption{Example of a training set $\setP \subset \mathbb{R}^2$ for which \FCNN selects $\Omega(\numBorder)$ points.}
\end{figure*}

\FCNN or \emph{Fast Condensed Nearest-Neighbor} is yet ano\-ther popular state-of-the-art algorithm for the \NN condensation problem. In contrast with \MSS, which finds selective subsets, \FCNN selects consistent subsets of \setP.

Let's now describe the selection process of \FCNN (see Algorithm~\ref{alg:fcnn}). Essentially, \FCNN maintains a subset of \setP, which is updated in each iteration, by adding points that are incorrectly classified using the current subset. The iterations stop when all points of \setP are correctly classified by the current subset, that is, when \FCNN is consistent. Starting with the centroids of each class, set $S$ contains some misclassified points from $\setP \setminus \FCNN$ that will be added in the next iterartion. How does the algorithm decide which points to include in $S$? Define $\textup{voren}(p,\FCNN,\setP)$ as the set of enemy points of $p$ in \setP, whose \NN in \FCNN is $p$, that is, the set $\{ q \in \setP \mid l(q) \neq l(p) \wedge \nenemy{q,\FCNN} = p \}$. Then, for each point $p \in \FCNN$, the algorithm selects one representative out of its corresponding $\textup{voren}(p,\FCNN,\setP)$, which is usually defined as the \NN to $p$.

\begin{algorithm}
 \DontPrintSemicolon
 \vspace*{0.1cm}
 \KwIn{Initial training set \setP}
 \KwOut{Condensed training set $\FCNN \subseteq \setP$}
 $\FCNN \gets \varnothing$\;
 $S \gets \textup{centroids}(\setP)$\;
 \While{$S \neq \varnothing$}{
  $\FCNN \gets \FCNN \cup S$\;
  $S \gets \varnothing$\;
  \ForEach{$r \in \FCNN$}{
   $S \gets S \cup \{ \textup{rep}(p,\textup{voren}(p,\FCNN,\setP)) \}$\;
  }
 }
 \KwRet{\FCNN}
 \vspace*{0.1cm}
 \caption{Fast Condensed Nearest-Neighbor}
 \label{alg:fcnn}
\end{algorithm}

\begin{theorem}
There exists a training set $\setP \subset \mathbb{R}^d$ with $\numBorder$ border points, for which \FCNN selects $\Omega(\numBorder)$ points. 
\end{theorem}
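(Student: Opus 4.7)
The plan is to build an explicit planar training set in which \FCNN is forced, by its iterative ``one representative per selected point'' rule, to add a constant fraction of the border points. The construction, pictured in Figures~\ref{fig:fcnn:mid} and~\ref{fig:fcnn:side}, consists of two ingredients: a \emph{middle arrangement} that controls where the class centroids land, and a repeated \emph{side arrangement} whose chains of alternating colours contribute the bulk of the border points.

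I would first pin down the middle arrangement so that the centroid of each class lies in a prescribed location near the center; the role of this gadget is simply to fix the pair of points that \FCNN seeds in its very first iteration, which in turn determines how the rest of the algorithm unfolds. The side arrangement is then a chain of alternating red and blue points placed along a segment, spaced so that each point's nearest enemy is its immediate neighbour on the chain. Each such chain contributes a number of border points of \setP proportional to its length, since every adjacent red/blue pair realizes a Delaunay edge between enemies. Replicating the chain gadget $\Theta(\numBorder)$ times (sufficiently far apart) then yields a training set with exactly $\Theta(\numBorder)$ border points in total.

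The heart of the argument is to unroll \FCNN's iterations and show that, within each side arrangement, the algorithm advances only one step at a time. I would argue inductively that on each iteration the only enemy point picked as a representative of $\textup{voren}(r,\FCNN,\setP)$ for an $r \in \FCNN$ sitting at the current ``tip'' of a chain is the immediate chain-neighbour of $r$, and no point in \FCNN elsewhere in the set covers any further point of that chain. Consequently one new point is absorbed per chain per iteration, and the process continues until a constant fraction of the chain lies in \FCNN. Summing over the $\Theta(\numBorder)$ border points contributed by the side arrangements yields $|\FCNN| = \Omega(\numBorder)$.

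The main technical obstacle is the nearest-neighbour bookkeeping needed to justify this inductive step: one has to verify that (i) on each iteration the Voronoi cell in \FCNN of the current tip covers exactly the expected set of misclassified enemies, (ii) the chosen representative is the very next chain element rather than one further down, which would short-circuit the process, and (iii) interactions between different side arrangements, and between a side arrangement and the middle, do not corrupt these local \NN relationships. All three requirements are handled by taking the inter-arrangement spacing large relative to the chain spacing, so each gadget behaves independently; the calibration of the ratios of these distances is where the care lies.
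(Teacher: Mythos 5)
Your construction is sound and does establish the stated bound, but it takes a genuinely different route from the paper's, in a way worth spelling out. The paper builds a four-class gadget (a ``middle'' arrangement that pins down the class centroids, flanked by ``side'' arrangements) precisely so that it can trace \FCNN's representative-selection dynamics step by step: the cascade forces \FCNN to collect every border point \emph{and}, in each gadget, several internal points. Your gadget is a two-class alternating chain, and for such a chain the conclusion does not actually depend on \FCNN's dynamics at all: if $p_{i+1}$ is selected and $p_i$ is not, the only points within unit distance of $p_i$ are its two chain neighbors, both enemies, so $p_i$'s nearest neighbor in the selected set is an enemy and $p_i$ is misclassified; hence membership propagates along the chain and \emph{every} nonempty consistent subset must contain the entire chain. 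Since \FCNN terminates exactly when its subset is consistent, $|\FCNN| = \Omega(\numBorder)$ follows at once, and the iteration-by-iteration bookkeeping you identify as the technical heart of the argument --- which representative is chosen, whether the process short-circuits, cross-gadget interference --- can be discarded entirely; only the routine ``gadgets far apart relative to their diameters'' separation is needed, and even the centroid-controlling middle arrangement is superfluous. The trade-off is that your instance proves the theorem for a reason that is vacuous with respect to \FCNN: on an alternating chain the $\Omega(\numBorder)$ lower bound holds for every consistent-subset algorithm, including the optimal one and \VSS, so it reveals nothing about \FCNN in particular, whereas the paper's construction is engineered so that the large selection (including internal, non-border points) is attributable to \FCNN's specific selection rule. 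If you keep your construction, state the consistency-propagation argument and drop the dynamic induction; if you want the result to carry the paper's intended message about \FCNN, you need a gadget, like the paper's, on which a small consistent subset exists but \FCNN fails to find one.
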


\begin{proof}
Consider the arrangement in Figure~\ref{fig:fcnn:mid} (left), consisting of points of 4 classes. The centroids of the blue, yellow, and red classes are the only points labeled as such. By placing a sufficient number of black points far at the top of this arrangement, we avoid their centroid to be any of the three black points in the arrangement. Beginning with the centroids, the first iteration of \FCNN would have added the points outlined in Figure~\ref{fig:fcnn:mid} (right). Now each of these points have one black point inside their Voronoi cells, and therefore, these black points will be the representatives added in the second iteration. This small example, with $\numBorder=5$, shows how to force \FCNN to add all the border points plus two internal points. Out of these two internal black points, one is the centroid added in the initial step. The remaining internal black point, however, was added by the algorithm during the iterative process. This scheme can be extended to larger values of $\numBorder$, without increasing the number of classes.

The previous is the first building block of the entire training set, shown in Figure~\ref{fig:fcnn:all}. To this ``middle'' arrangement, we append ``side'' arrangements of points, as the one illustrated in Figure~\ref{fig:fcnn:side}, which will have similar behavior to the middle arrangement. This particular side arrangement will be appended to the right of the middle one, such that the distance between the red points is greater than the distance from the yellow to the red point. Every time we append a new side arrangement, its blue and red labels are swapped. The arrangements appended to the left side are simply a horizontal flip of the right arrangement. Now, the behavior of \FCNN in such a setup is illustrated with the arrows in Figure~\ref{fig:fcnn:all}. The extreme point of the previous arrangement adds the yellow point at the center of the current arrangement, which then adds the red point next to the blue point, as is closer than the other red point. Next, this red point adds the blue point, and the yellow point adds the remaining red point. Finally, the Voronoi cells of these points will look as shown in Figure~\ref{fig:fcnn:side} (right), and in the next iteration, the tree black points will be added.

After adding side arrangements as needed (same number of the left and right), it is easy to show that the centroids are still the tree points in the middle arrangement and the black point at the top (by adding a sufficient number of black points in the top cluster). This implies than \FCNN will be forced to select $\Omega(\numBorder)$ points.
\end{proof}

While this example sheds light on the selection behavior of \FCNN, an upper-bound is still missing. Based on the following lemma, we conjecture that \FCNN selects at most $\bigOh(\numNE\ c^{d-1})$~points, for some constant $c$.

\begin{lemma}
Consider a point $p$ selected by \FCNN. Then, the number of representatives of $p$ selected throughout the algorithm does not exceed $\bigOh((3/\pi)^{d-1})$ points.
\label{lemma:fcnn}
\end{lemma}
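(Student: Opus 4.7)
The plan is to mimic the angular packing argument used in Theorem~\ref{thm:rss:size} for \RSS, but applied to the sequence of representatives of a fixed point $p \in \FCNN$. Let $q_1, q_2, \ldots, q_t$ denote, in temporal order, the points added to \FCNN as representatives of $p$ across successive iterations. For each iteration $i$, let $R_i \subseteq \FCNN$ denote the current subset just before $q_i$ is added. Note that $p \in R_1$ and $R_1 \subseteq R_2 \subseteq \cdots$, and that $q_i$ being selected as $p$'s representative means $q_i \in \text{voren}(p, R_i, \setP)$ and $q_i$ is the nearest such enemy to $p$.

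First I would establish two distance inequalities for any $i < j$. The easier one is $\dist{p}{q_j} < \dist{q_i}{q_j}$: because $q_i \in R_j$ and $q_j$'s nearest neighbor in $R_j$ is $p$. The subtler one is $\dist{p}{q_i} \leq \dist{p}{q_j}$. To prove this, I would exploit the monotonicity of the subset: since $R_i \subseteq R_j$ and $p$ is the nearest neighbor of $q_j$ in the larger set $R_j$, it is also (at least tied for) the nearest neighbor of $q_j$ in the smaller set $R_i$. Consequently $q_j$ was already a member of $\text{voren}(p, R_i, \setP)$ at iteration $i$, and because $q_i$ was chosen as the \emph{nearest} enemy to $p$ in that set, we must have $\dist{p}{q_i} \leq \dist{p}{q_j}$.

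With both inequalities in hand, consider the triangle $\triangle p q_i q_j$: the side $q_i q_j$ is strictly the longest, so the angle $\angle q_i p q_j$ opposite to it is the largest angle of the triangle and therefore exceeds $\pi/3$. Thus every pair of representatives of $p$ subtends an angle larger than $\pi/3$ at $p$. A standard packing argument on the unit sphere $S^{d-1}$ centered at $p$ then bounds the number of representatives by $\bigOh((3/\pi)^{d-1})$, matching the constant obtained in the proof of Theorem~\ref{thm:rss:size}.

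The only step that is not entirely mechanical is the monotonicity argument that forces $q_j \in \text{voren}(p, R_i, \setP)$; once that containment is established, the geometry collapses to the same angular packing used for \RSS. I expect this to be the only subtle point, since the rest of the proof is a direct restatement of the triangle argument in Theorem~\ref{thm:rss:size}.
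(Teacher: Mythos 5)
Your proof is correct and takes essentially the same route as the paper's: both reduce the claim to showing that any two representatives of $p$ subtend an angle of at least $\pi/3$ at $p$ and then invoke the packing argument from Theorem~\ref{thm:rss:size}. Your monotonicity step justifying $\dist{p}{q_i} \leq \dist{p}{q_j}$ --- observing that $R_i \subseteq R_j$ forces $q_j \in \textup{voren}(p, R_i, \setP)$ already at iteration $i$, so the nearest-enemy choice of $q_i$ gives the inequality --- is precisely the detail the paper's proof asserts without justification, so your write-up is, if anything, more complete.
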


\begin{proof}
This proof follows from similar arguments to the ones described in the proof of Theorem~\ref{thm:rss:size}. Consider $p_1, p_2 \in \FCNN$ to be two points added to by the algorithm as representatives of $p$. Without loss of generality, $p_1$ was added before $p_2$, implying that $\dist{p}{p_1} \leq \dist{p}{p_2}$. By construction, if $p_2$ was added as a representative of $p$, and not of $p_1$, we also know that $\dist{p}{p_2} \leq \dist{p_1}{p_2}$. From this, consider the triangle $\triangle p p_1 p_2 $ and the angle $\angle p_1 p p_2$. This is the largest angle of the triangle, meaning that $\angle p_1 p p_2 \geq \pi/3$. Finally, by a standard packing argument, there are at most $\bigOh((3/\pi)^{d-1})$ such points.
\end{proof}

%
%
%
\section{Open problems}

A few key questions remain unsolved:
\begin{itemize}[noitemsep,topsep=1pt]
    \item In terms of $\numBorder$, our best upper-bound on the selection size of \RSS is not tight. Can it be improved?
    \item Is it possible to prove an upper-bound on the selection size of \FCNN in terms of either $\numNE$ or $\numBorder$?
\end{itemize}


\footnotesize
\bibliographystyle{abbrv}
\bibliography{nnc}

\end{document}